\DeclareMathOperator{\E}{\mathbb{E}}
\newtheorem{theorem}{Theorem}
\newtheorem{dfn}{Definition}
\begin{document}
\title{Optimization of End-to-End AoI in Edge-Enabled Vehicular Fog Systems: A Dueling-DQN Approach}

\author{Seifu~Birhanu~Tadele, 
        Binayak~Kar,~\IEEEmembership{Member,~IEEE}, 
        Frezer~Guteta~Wakgra, and 
        Asif~Uddin~Khan,~\IEEEmembership{Member,~IEEE}

\thanks{This work was supported by the National Science and Technology Council (NSTC), Taiwan, under Grant 112-2221-E-011-057 -}
\thanks{S. B. Tadele, B. Kar, and F. G. Wakgra are with the Department of Computer Science and Information Engineering, National Taiwan University of Science and Technology, Taipei 106, Taiwan (E-mail: seife.brhan@gmail.com, bkar@mail.ntust.edu.tw, haftiyam@gmail.com.)}
\thanks{A. U. Khan is with the School of Computer Engineering, KIIT Deemed to be University, Bhubaneswar, Odisha, India. (E-mail: asif.khanfcs@kiit.ac.in)}
}

\maketitle

\begin{abstract}
In real-time status update services for the Internet of Things (IoT), the timely dissemination of information requiring timely updates is crucial to maintaining its relevance. Failing to keep up with these updates results in outdated information. The age of information (AoI) serves as a metric to quantify the freshness of information. The Existing works to optimize AoI primarily focus on the transmission time from the information source to the monitor, neglecting the transmission time from the monitor to the destination. This oversight significantly impacts information freshness and subsequently affects decision-making accuracy. To address this gap, we designed an edge-enabled vehicular fog system to lighten the computational burden on IoT devices. We examined how information transmission and request-response times influence end-to-end AoI.
As a solution, we proposed Dueling-Deep Queue Network (dueling-DQN), a deep reinforcement learning (DRL)-based algorithm, and compared its performance with DQN policy and analytical results. Our simulation results demonstrate that the proposed dueling-DQN algorithm outperforms both DQN and analytical methods, highlighting its effectiveness in improving real-time system information freshness. Considering the complete end-to-end transmission process, our optimization approach can improve decision-making performance and overall system efficiency.
 
\end{abstract}

\begin{IEEEkeywords}
Information freshness, AoI, IoT, vehicular fog, edge computing, DQN.
\end{IEEEkeywords}

\section{Introduction}
\IEEEPARstart{T}{he} integration of wireless communications with IoT devices has been driven by the increasing demand for real-time status updates \cite{chen2022minimizingRev}. Real-time status updates have diverse applications, ranging from optimizing energy consumption in smart homes \cite{Menouar2017} 
to predicting and managing forest fires by monitoring temperature and humidity \cite{bisquert2012application}. 
Additionally, they play a crucial role in assisting drivers by providing data on a vehicle’s speed and acceleration in intelligent transportation systems \cite{xiong2012intelligent}. To measure the timeliness of this information, a metric called AoI has been introduced \cite{Kaul2012}. It measures the duration between the latest received information and its generation time. It is of great importance in IoT applications where the timeliness of information is crucial, e.g., in monitoring the status of a system or estimating a Markov process \cite{chen2022age}. Unlike conventional measurements like throughput and latency, AoI considers the generation time and delay in the transmission of information \cite{Xie2022}.

IoT systems are crucial in processing and transmitting real-time status updates \cite{Roy2019, hu2021status_9}. 
However, the constrained processing capabilities of IoT devices may present challenges in handling real-time status updates for computationally intensive packets. This challenge can be mitigated by offloading real-time data to edge computing \cite{kar2023cost, wakgra2024multi} and vehicular fog computing (VFC), ensuring rapid processing and the preservation of information freshness. 
Edge computing, rooted in the European Telecommunication Standards Institute's cloud virtualization concept, positions servers near cellular base stations to process data closer to users, ensuring faster response times \cite{hu2015mobile}.
VFC optimizes vehicular communication and computational resources, leveraging vehicles as the underlying infrastructure \cite{kar2021qos}. Vehicles can become intelligent entities through an edge server, interacting with their surroundings (V2I) \cite{Sorkhoh2020}. VFC improves the Intelligent Traffic System (ITS) by processing real-time traffic information, providing accident alerts, facilitating path navigation, and more \cite{keshari2022survey}.

Several studies have been conducted on the optimization of AoI based on the transmission time of packets from the source to the monitor \cite{chen2023improving_14, Chao2020, Qiaobin2020}. However, as illustrated in Figure \ref{fig: Pmagt}, the time taken to transmit processed packets from the monitor to the destination has been overlooked. The transmission time of packets from the monitor to the destination significantly influences the freshness of information observation, thereby impacting the system’s performance in making accurate decisions. When outdated information reaches the intended destinations, it unavoidably degrades the quality and reliability of decisions drawn from it, potentially compromising safety and security. Additionally, the limited processing capabilities of IoT devices for computationally intensive packet processing can lead to the distribution of stolen information to the destination.

Let’s consider an example in intelligent transportation systems \cite{xiong2012intelligent}, where the aim is to keep drivers well-informed about their environment's status, including updates on traffic congestion, accident alerts, emergency message delivery, and earthquake alerts. The information provided by these services frequently changes over time. Consequently, the information's freshness and timely updates \cite{Yin2017, deng2022information_18} 
and efficient dissemination of this information are crucial to ensuring its delivery in a fresh state. Otherwise, outdated information will reduce its usefulness, particularly in dynamically changing environments.

We designed an edge-enabled vehicular fog system, as shown in Figure \ref{fig: Pmagt}, to reduce the computational load of collecting packets from IoT devices. The system offloads packets to an edge server and then to a vehicular fog node via wireless communication. To address this, we formulated an optimization problem for average AoI and proposed a DRL-based dueling-DQN algorithm to solve it \cite{wang2016dueling}. The key technical contributions are summarized below:

\begin{enumerate}
\item Designed edge-enabled vehicular environment, formulated and analyzed the impact of packet transmission time and request-response time on optimizing average end-to-end AoI.
\item We proposed a DRL-based dueling-DQN algorithm to solve the issue.
\item We validated the analytical results by implementing the proposed approach and constructing Neural Networks through simulation experiments using TensorFlow \cite{Abadi2016}.
\item Furthermore, we conducted a performance comparison of the proposed algorithm against both the current DQN algorithm and the analytical results.
\end{enumerate}

The rest of the paper is organized as follows. Section \ref{sec:Related-Works} provides a review of related works. Section \ref{System_Model} elucidates the system models and analyses of AoI. In Section \ref{sec:Problem_Formulation}, the paper details the problem formulations and introduces the proposed dueling-DQN solution. Section \ref{Performance} delves into the simulation and results, while Section \ref{Conclusion} concludes this work.

\section{Related Works}
\label{sec:Related-Works}
This section presents related works on AoI optimization. As described in the next section, we have categorized these works into three groups, which are explained in the next section and summarized in Table \ref{tab:AoI_servey}.

\subsubsection{AoI Optimization using Queue Models} 
Yates et al.~\cite{yates2018age} examine a straightforward queue with several sources sending updates to a monitor. To optimize AoI, they consider two types of last-come-first-served queueing systems and first-come-first-served and used stochastic hybrid systems.
Chao et al. \cite{Chao2020} developed an IoT system incorporating multiple sensors to assess information freshness, optimizing peak AoI (PAoI). Elmagid et al. \cite{Elmagid2022} considered a multi-source updating system that uses SHS for several queueing disciplines and energy harvesting (EH) to deliver status updates about multiple sources to a monitor.

\subsubsection{AoI Optimization based on Edge-enabled Models} Qiaobin et al. \cite{Qiaobin2020} employed a zero-wait policy and used MEC to examine the AoI for computation-intensive messages in their research. Ying et al. \cite{Ying2022} have researched a wireless sensor node that uses RF energy from an energy transmitter to power the transfer of compute tasks to a MEC server. The zero-wait policy is implemented when the sensor node submits the computing task to the MEC server after the capacitor has fully charged. Zipeng et al. \cite{Zipeng2022} discussed a real-time information dissemination approach over vehicular social networks (VSNs) with an AoI-centric perspective. They utilize scale-free network theory to model social interactions among autonomous vehicles (AVs).

 \begin{table*}[t]
   \centering
   \caption{Survey on AoI of optimization in different scenarios}
    \label{tab:AoI_servey}
   \resizebox{\textwidth}{!}{%
     \begin{tabular}{|l|p{3.425em}|p{7.615em}|p{3.075em}|p{2.81em}|l|p{4.73em}|p{12.77em}|l|p{11.075em}|}
     \hline
     \multicolumn{1}{|c|}{\multirow{2}[4]{*}{\textbf{Categories }}} & \multirow{2}[4]{*}{\textbf{Papers}} & \multirow{2}[4]{*}{\textbf{Target Network }} & \multicolumn{2}{p{5.885em}|}{\textbf{Source }} & \multicolumn{2}{p{8.27em}|}{\textbf{Estimation of AoI}} & \multirow{2}[4]{*}{\textbf{Objectives}} & \multicolumn{1}{c|}{\multirow{2}[4]{*}{\textbf{Constraint}}} & \multirow{2}[4]{*}{\textbf{Approaches}} \\
\cline{4-7}           & \multicolumn{1}{c|}{} & \multicolumn{1}{c|}{} & \textbf{Single} & \textbf{Multi } & \multicolumn{1}{p{3.54em}|}{\textbf{Source to Monitor }} & \textbf{Monitor to  Destinations} & \multicolumn{1}{c|}{} &       & \multicolumn{1}{c|}{} \\
     \hline
     \multicolumn{1}{|l|}{\multirow{3}[6]{*}{AoI Optimization based on Queueing Models}} & \cite{Roy2019}   & Simple Queueing Model & \multirow{3}[6]{*}{$\times$} & \multirow{3}[6]{*}{$\checkmark$} & \multicolumn{1}{l|}{\multirow{12}[24]{*}{\textbf{\checkmark}}} & \multirow{11}[22]{*}{$\times$} & Average AoI  & \multicolumn{1}{l|}{\multirow{2}[4]{*}{Queue Capacity}} & Stochastic Hybrid System (SHS) \\
\cline{2-3}\cline{8-8}\cline{10-10}           & \cite{Chao2020}  & Tandem Queueing Model  & \multicolumn{1}{l|}{} & \multicolumn{1}{l|}{} &       & \multicolumn{1}{l|}{} & Average PAoI &       & Min-max Optimization \\
\cline{2-3}\cline{8-10}           & \cite{Elmagid2022}  & General Queueing Model & \multicolumn{1}{l|}{} & \multicolumn{1}{l|}{} &       & \multicolumn{1}{l|}{} & Average AoI & \multicolumn{1}{p{6.615em}|}{Energy} & SHS \\
\cline{1-5}\cline{8-10}     \multicolumn{1}{|l|}{\multirow{3}[6]{*}{AoI Optimization based on Edge-enabled Models}} & 

\cite{Qiaobin2020}  & Mobile Edge Computing  & \multirow{2}[4]{*}{\textbf{\checkmark}} & \multirow{2}[4]{*}{$\times$} &       & \multicolumn{1}{l|}{} & Average AoI  & \multicolumn{1}{p{6.615em}|}{Resources} & Zero-wait policy \\
\cline{2-3}\cline{8-10}           & 

\cite{Ying2022}  & IoT Device & \multicolumn{1}{l|}{} & \multicolumn{1}{l|}{} &       & \multicolumn{1}{l|}{} & Average AoI & \multicolumn{1}{p{6.615em}|}{Energy} & Generating set search-based algorithm \\
\cline{2-5}\cline{8-10}           & 

\cite{Zipeng2022}  & Vehicular Network & $\times$ & \multirow{3}[6]{*}{\textbf{\checkmark}} &       & \multicolumn{1}{l|}{} & Average Peak Network AoI (NAoI)  & \multicolumn{1}{p{6.615em}|}{Transmission Capacity} & Parametric Optimization Schematic \\
\cline{1-4}\cline{8-10}     \multicolumn{1}{|l|}{\multirow{6}[12]{*}{AoI Optimization using RL Approach}} & 

\cite{Xie2022}  & IoT Device & \textbf{\checkmark} & \multicolumn{1}{l|}{} &       & \multicolumn{1}{l|}{} & Average weighted sum of AoI and energy  & \multicolumn{1}{p{6.615em}|}{Energy} & DRL \\
\cline{2-4}\cline{8-10}           & 

\cite{Zoubeir2022}  & Vehicle to Everything  & $\times$ & \multicolumn{1}{l|}{} &       & \multicolumn{1}{l|}{} & Average AoI & \multicolumn{1}{l|}{\multirow{5}[10]{*}{Resources}} & DRL(DDPG) \\
\cline{2-5}\cline{8-8}\cline{10-10}           & 

\cite{Sihua2022}  & IoT Device & \multirow{3}[6]{*}{\textbf{\checkmark}} & \multirow{3}[6]{*}{$\times$} &       & \multicolumn{1}{l|}{} & Average weighted sum of AoI and energy &       & Distributed RL \\
\cline{2-3}\cline{8-8}\cline{10-10}           & 

\cite{Lingshan2022}  & Edge Computing & \multicolumn{1}{l|}{} & \multicolumn{1}{l|}{} &       & \multicolumn{1}{l|}{} & Average AoI  &       & DQN \\
\cline{2-3}\cline{8-8}\cline{10-10}           & 

\cite{Xianfu2020}  & Vehicle Network & \multicolumn{1}{l|}{} & \multicolumn{1}{l|}{} &       & \multicolumn{1}{l|}{} & AoI-aware radio resource allocation &       & DRQN \\
\cline{2-5}\cline{7-8}\cline{10-10}           & 

Ours  & Edge and Vehicular Fog & $\times$ & \textbf{\checkmark} &       & \textbf{\checkmark} & Average end-to-end AoI &       & RL(dueling-DQN) \\
     \hline
     \end{tabular}%
      }
 \end{table*}

\subsubsection{AoI Optimization using RL Approach} 
Xie et al. \cite{Xie2022} explored an IoT system comprising devices endowed with computing capabilities. These devices can update their status, and this update can either be computed directly by the monitor or by the device itself before being transmitted to the destination. The authors aimed to minimize the average weighted sum of AoI and energy consumption. To achieve this, they employed DRL based on offloading and scheduling criteria. 
Zoubeir et al. \cite{Zoubeir2022} employed the deep deterministic policy gradient (DDPG) based DRL algorithm to reduce the average end-to-end AoI within 5G vehicle networks.
Sihua et al. \cite{Sihua2022} introduced an innovative distributed RL method for sampling policy optimization to minimize the combined weighted sum of AoI and total energy consumption within the examined IoT system, as referenced in their research. Lingshan et al. \cite{Lingshan2022} devised a novel approach for an Unmanned Aerial Vehicle (UAV)-assisted wireless network with Radio Frequency (RF) Wireless Power Transfer (WPT) using DRL. They formulated the problem as a Markov process with extensive state spaces and introduced a DQN-based strategy to discover a near-optimal solution for it. Xianfu et al. \cite{Xianfu2020} analyzed the management of radio resources in a vehicle-to-vehicle network within a Manhattan grid. Their work involves an inventive algorithm utilizing DRL techniques and long short-term memory, enabling a proactive approach within the network.

The above-mentioned works focus on optimizing the AoI of data packet transmission time from source to monitor. However, the time to transmit the results from the monitor to the destination has been ignored. This presents a challenge in addressing the needs of intelligent applications for real-time processing of packets, making it difficult to meet system performance requirements.

\section{System Model and Analysis of AoI}
\label{System_Model}
In this section, we discussed the edge-enabled vehicular fog systems model and the analysis of AoI models. Table~\ref{tab:notation} presents the notations used in AoI analysis and problem formulation.

\subsection{Edge-enabled Vehicular Fog Model}
We considered a real-time status update in edge-enabled vehicular fog systems as illustrated in Figure \ref{fig: Pmagt}. The system consists of a set of IoT devices (sources), denoted as $I_{j}$, an edge server (monitor) denoted as ${A}$, and vehicular fog (destination) denoted as ${V}$. The packets from the source are generated based on a Poisson process with a parameter $\lambda_{j}$ and served according to independent and identically distributed (i.i.d.) service times with rates $\mu^{\text{A}}$ and $\mu^{\text{$V$}}$ where $\mu^{\text{A}}$ and $\mu^{\text{$V$}}$,  representing the computational capacities of the edge server and vehicular fog, respectively. The server utilization offered by the $j$th source is $\rho_{j}=\frac{\lambda_{j}}{\mu^{\text{$A$}}}$, and the total server utilization is $\rho=\sum_{i=1}^{J} \rho_{j}$. Thus, the updates of the $j$th source compete with the aggregate of the other sources, with $\rho_{-j}=\rho-\rho_{i}=\sum_{j\neq i}^{J} \rho_{i}$, where $i\in I_j$. The edge server receives updates from the IoT devices and keeps track of the system's state. The vehicular fog can connect to the edge server via a wireless communication channel and request the latest information. The updated information from the edge server is offloaded to the vehicular fog via the wireless communication channel. Using the metric Age of AoI, we can assess how recent the status information is at the destination (target).

\begin{table}[!t]
  \centering
  \caption{Description of important notations}
  \label{tab:notation}
    \begin{tabular}{|p{2.25cm}|p{5.70cm}|}
    \hline
    \textbf{Notations} & \textbf{Descriptions} \\
    \hline
    \text{$A$} & Edge server  \\
    \hline
    \text{$V$} & Vehicular-fog \\
    \hline
    $I_j$  & Set of IoT devices, ${j}=\{1, 2, 3, \cdots, J\}$ \\
    \hline
    $d_{j}^{\text{IA}}$, $d^{\text{AV}}$  & Distance between $j$th IoT device to Edge server and Edge server to Vehicular fog \\
    \hline
    
    $\mu^{A}$, $\mu^{\text{V}}$ & Computational capacity of edge servers and vehicular fog \\
    \hline
   
    $B_{j}^{\mathrm{I \rightarrow A}}$, $B^{\mathrm{A \rightarrow V}}$ & Uplink and  Downlink transmission rate from $j$th IoT device to edge server and edge server to vehicular fog  \\
    \hline
   
    $\lambda_{j}$  & The packet arrival from the $j$th IoT device to the edge server \\
    \hline
    $S$ & Size of the original packets from the IoT device  \\
    \hline
    $Y$ & Size of the processed data packets from the edge server \\
    \hline
    $\E{\big [ R_{j} \big ]}$  & Time expected between packet arrivals from IoT devices \\
    \hline
    $\E{\big [ I_{j}^{T} \big ]}$, $\E{\big [ V_{j}^{T} \big ]}$  & Time expected for packets to be sent from IoT devices to edge server, and from edge nodes to vehicular fogs \\
    \hline
    $\E{\big [ L_{j}^{T} \big ]}$, $\E{\big [ L_{j}^{P} \big ]}$ & Data transmission queue and processing queue expected waiting times  \\
    \hline
    $\E{\big [ A_{j}^{P} \big ]}$ & The edge server's expected processing time \\
    \hline
    $Pr$  & Server busyness probability \\
    \hline
    $\Delta$  & Average end-to-end AoI  \\
    \hline
    \end{tabular}%
\end{table}%

Figure \ref{fig: Pmagt} illustrates the structure of queuing systems for packet management policies. Packets are generated from the source device, pushed into an infinite buffer in the transmission system, and then moved to the processing system. Both the transmission and processing systems are formulated as queuing models with infinite buffer sizes. These buffers are termed the transmission queue and the processing queue.
An arriving packet will be transmitted to the edge server if the transmission channel is available. However, when the transmission channel is busy, arriving packets wait in the transmission queue.
We assume that the server can handle at most one packet at a time, with packets being serviced using a first-come-first-served (FCFS) queuing discipline. When the server is busy, any new arriving packets are placed in the processing queue until the current packet is finished. 
    
\begin{figure}[!t]
  \centering
  \includegraphics[width=9cm,height=7cm]{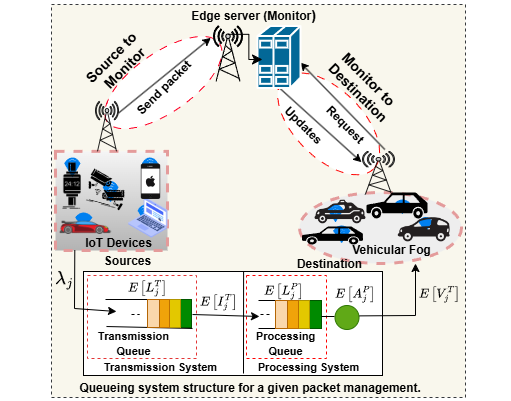}
  \caption{Illustration of the edge and vehicular fog architecture with the queueing system for a given packet management policy.}\label{fig: Pmagt}
  \end{figure}
  
\subsection{AoI Analysis in Edge-enabled Vehicular Fog Systems}
\subsubsection{Communication model}
The size of the processed packet affects the communication between the transmitted packet from the IoT source to the edge server, and the edge server to vehicular fog. It directly impacts the transmission rate and how long it takes to transmit it in the transmission stage. This communication model can be formulated using Shannon's formula \cite{shannon1948mathematical} as:
{\small
\begin{flalign}\label{eq:CM1}
   & B_{j}^{I \rightarrow A}=W\log_{2}\Bigg(1+\frac{P_{j}C(d_{j}^{IA})^{-l}}{N^{2}}\Bigg),
    \end{flalign}
\begin{flalign}\label{eq:CM2}
    & B_{j}^{A \rightarrow V}=W\log_{2}\Bigg(1+\frac{P_{A}C(d_{j}^{AV})^{-l}}{N^{2}}\Bigg),
    \end{flalign}}
where $B_{j}^{\mathrm{I \rightarrow A}}$ and $B_{j}^{A \rightarrow V}$ are the up-link and down-link rate of a packet for the communication between the IoT device to edge server, and edge server to vehicular fog, respectively, $W$ is the channel bandwidth, $C$ is the gain of a channel, $P_{j}$ and $P_{A}$ are the transmit channel power gains of the IoT device and edge server, respectively, $N^{2}$ is the density of noisy power and $-l$ is the path loss exponent.

\subsubsection{Time-Average Age Analysis}
-
\begin{figure}[!t]
  \centering
  \includegraphics[width=8.5cm,height=2.5cm]{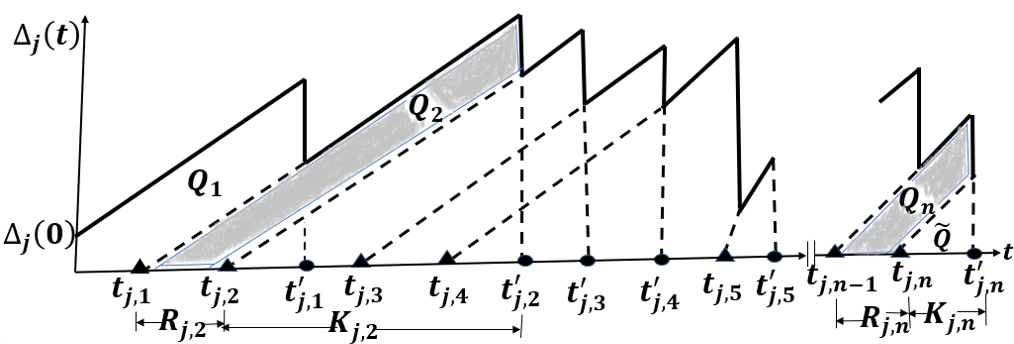}
  \caption{An example of how AoI evolves for $j$th IoT devices. Triangles and circles represent packet arrival times at edge nodes and vehicular fog, respectively.} 
  \label{fig: ExAoI}
\end{figure}

Figure \ref{fig: ExAoI} shows a sample variation in age $\Delta_{j}(t)$ as a function of time $t$ for the $j$th source at the destination. Without loss of generality, we start observing $t = 0$ when the queue is empty and the age is $\Delta_{j}(0)=0$. The initial status update of the $j$th source is time-stamped as $t_{(j,1)}$ and is followed by updates time-stamped $t_{(j,2)}$, $t_{(j,3)}$, $\cdots$, $t_{(j,n)}$. The AoI of the packet generated by the IoT device $I_{j}$ is given by $\Delta_{j}(t)=t-u_{j}(t)$, where $u_{j}(t)$ is the generation time of the most recently received packet from an IoT device $I_{j}$ until time instant $t$.

At the destination, the age of source $I_{j}$ increases linearly without updates, and it resets to a smaller value when an update is received. The update of the $j$th source, generated at time $t_{(j,i)}$, finishes service and is received by the destination at time $t_{(j,i)}^{\prime}$. At $t_{(j,i)}^{\prime}$, the age $\Delta_{j}(t_{(j,i)}^{\prime} )$ at the destination is reset to the age $K_{(j,i)}=t_{(j,i)}^{\prime}-t_{(j,i)}$ of an update on the status. $K_{(j,i)}$ denotes the system time of packet $i$ for the $j$th IoT device. Thus, the age function $\Delta_{j}(t)$ shows a saw-tooth pattern as seen in Figure \ref{fig: ExAoI}. In the status update process, the time-average age is the area under the age graph normalized by the observation period $\left(0, T\right)$, as shown in \cite{Roy2019}.
{\small
\begin{flalign}\label{eq:CM3}
    &\Delta_{j}(T)=\frac{1}{T} \int_{0}^{T} \Delta_{j}(t) \,dt.
    \end{flalign}}
    
To simplify it, the observation interval length has been set to $T=t_{(j,n)}^{\prime}$. In Equation (\ref{eq:CM3}), the area defined by the integral is divided into the sum of the areas of the polygons $Q_{1}$, the trapezoidal areas $Q_{i}$ for $i \geq 2$ ($Q_{2}$ and $Q_{n}$, highlighted in Figure \ref{fig: ExAoI}), and the triangular area of width $K_{(j,n)}$ over the time interval $\tilde{Q}$. $Q_{i}$ is the difference in the area of the points connected by the base of an isosceles triangle can be calculated $t_{(j,i-1)}$  and $t_{(j,i)}^{\prime}$, and the area of the isosceles triangle with a base connecting the points $t_{(j,i)}$ and $t_{(j,i)}^{\prime}$, then the inter-arrival time of the update can be calculated as $R_{(j,i)}=t_{(j,i)}-t_{(j,i-1)}$.
\begin{flalign}\label{eq:CM4}
    & Q_{i}=\frac{1}{2}(R_{j,i}+K_{j,i})^{2}-\frac{1}{2}K_{j,i}^{2}=\frac{1}{2}K_{j,i}^{2}+R_{j,i}K_{j,i}.
    \end{flalign}
With $N_{j}(T)=\max \left\{n \mid t_{n} \leq T\right\}$, where $N_{j}(T)$ is the number of packets of $j$th sources updated by time $T$. After some rearrangement and using Equation (\ref{eq:CM4}), the time-average age is
{\small
\begin{flalign}\label{eq:CM5}
    &\Delta_j(T)=\frac{\Bar{Q}}{T}+\frac{\left(N_j(t)-1\right)}{T} \times \frac{1}{\left(N_j(t)-1\right)} \sum_{i=2}^{N_j(T)} Q_i, 
    \end{flalign}}
where $\Bar{Q}=Q_1+\tilde{Q}$, and as $T \rightarrow \infty$ the contribution of $\Bar{Q}$ to the average AoI is negligible.

\begin{dfn}
If ($R_{(j,n)}, K_{(j,n)} $) is a stationary sequence with a marginal distribution equal to ($R_j, K_j $), then for source $j$ there exists an ergodic and stationary status updating system, and as $T\rightarrow\infty$,
  $\frac{N_j(T)}{T} \rightarrow \frac{1}{\mathbb{E}\left[R_j\right]}$, and $\frac{\sum_{i=2}^{N_j(T)} Q_i}{N_j(T)-1}\rightarrow \mathbb{E}[Q]$ with probability 1.
\end{dfn}
In a stationary ergodic status updating system, $R_j$ represents the inter-arrival time between updates from source $j$, and $K_j$ is the system time for these updates.
Little’s Law does not explicitly provide conditions for the ergodicity of the age process, but these can be verified in well-designed service systems. In systems with infinite buffers, update rates must be controlled to maintain a stationary and ergodic age process \cite{Roy2019}. For such a system, the time average $\Delta_j(T)$ tends to the probability density, and the average AoI of $j$th source using Equation (\ref{eq:CM5}) as $T \rightarrow \infty$, is
\begin{flalign}\label{eq:CM6}
   &\Delta_j=\lim _{T \rightarrow \infty} \Delta_j(T).
    \end{flalign}
The system time $K_j$ is determined by the waiting time in the processing system $K_{j}^P$ and in the transmission system $K_{j}^T$. As such, we can write $K_{j}=K_{j}^P+K_{j}^T$.
 So, the AoI for the $j$th source in Equation (\ref{eq:CM6}) is formulated as
 {\small
\begin{flalign}\label{eq:CM7}
   &\begin{aligned} & \Delta_j=\frac{\mathbb{E}[Q]}{\mathbb{E}\left[R_j\right]}=\frac{\mathbb{E}\left[R_j K_j\right]+\mathbb{E}\left[R_j^2\right] / 2}{\mathbb{E}\left[R_j\right]} \\ & =\frac{\mathbb{E}\left[R_j K_j^P\right]+\mathbb{E}\left[R_j K_j^T\right]+\mathbb{E}\left[R_j^2\right] / 2}{\mathbb{E}\left[R_j\right]}.\end{aligned}
    \end{flalign}}
Referring to Equation (\ref{eq:CM7}), we emphasize to $R_j$ and $K_j$, it should  noted that the packet arrives from the $j$th IoT device with parameter $\lambda_{j}$ implies $\mathbb{E}\left[R_j\right]=\frac{1}{\lambda_j}$ and $\mathbb{E}\left[R_j^2\right]=\frac{2}{\lambda_j^2}$, then the AoI can be formulated as
\begin{flalign}\label{eq:CM8}
   &\Delta_j=\frac{1}{\lambda_j}+\mathbb{E}\left[K_j^P\right]+\mathbb{E}\left[K_j^T\right].
    \end{flalign}
Following Equation (\ref{eq:CM8}), 
the waiting time for a packet in the processing system is determined by its waiting time in the processing queue ($L_j^P$) plus the server processing time ($A_j^P$).
This can be expressed as:  $\mathbb{E}\left[K_j^P\right]=\mathbb{E}\left[L_j^P\right]+\mathbb{E}\left[A_j^P\right]$, where $\mathbb{E}\left[A_j^P\right] =  \frac{1}{\mu^{A}}$. 
In the other case, the waiting time for a packet in the transmission system is given by the sum of its waiting time in the transmission queue ($L_j^T$) and the time taken in the transmission channel, which includes the transmission from the queue to the edge server ($I_j^T$) and from the edge server to the vehicular fog ($V_j^T$).
This can be expressed as: $\mathbb{E}\left[K_j^T\right]=\mathbb{E}\left[L_j^T\right]+\mathbb{E}\left[I_j^T\right]+\mathbb{E}\left[V_j^T\right]$. Then Equation (\ref{eq:CM8}) can be reformulated as
{\small
  \begin{equation}
\begin{aligned}\label{eq:CM9}
& \Delta_j=\frac{1}{\lambda_j}+\frac{1}{\mu^A}+\mathbb{E}\left[L_j^P\right]+\mathbb{E}\left[I_j^T\right]+\mathbb{E}\left[V_j^T\right]+\mathbb{E}\left[L_j^T\right].
\end{aligned}
\end{equation}}
Following this, we need the analyze of each element, in Equation (\ref{eq:CM9}), $\left(\text{i.e., } \mathbb{E}\left[L_j^P\right],\mathbb{E}\left[I_j^T\right],\mathbb{E}\left[V_j^T\right], \mathbb{E}\left[L_j^T\right]\right)$.

\subsubsection{ Analysis of Packet transmission rate}
To analyze the packet transmission rate, as formulated in Equations (\ref{eq:CM1}) and (\ref{eq:CM2}), we utilized Shannon's formula \cite{shannon1948mathematical}. Here the rate stands for the transmission channel capacity. Higher transmission rates allow faster data transmission, reducing the time it takes for information to travel from the source to the destination. Faster transmission rates lead to lower AoI, as updates can be delivered more quickly and frequently, keeping the information fresher.

\paragraph {Analysis of $\mathbb{E}\left[I_j^T\right]$}
The transmission rate is given in Equation (\ref{eq:CM1}), and according to a Rayleigh distribution \cite{weinberg2016optimal}, transmission rates are random variables. Then we can calculate the expected service time for packets from the $j$th IoT device using the following theorem.
\begin{theorem}\label{Th_1} The time it takes to successfully transmit packets generated by $j$th IoT device is calculated as
\begin{equation} 
\begin{aligned}\label{eq:CM10}
\resizebox{0.4\textwidth}{!}{$\mathbb{E}\left[I_j^T\right]=\frac{\operatorname{S_j\ln(2)} N^2 d_j^{I A} l}{W P_j} \int_0^{\infty} \exp \left(\frac{\operatorname{S_j\ln(2)}}{W t}+\frac{1-\exp \left(\frac{\operatorname{S_j\ln(2)}}{W t}\right)}{P_j N^{-2} d_j^{I A}-l}\right) \frac{d t}{t}$ }.
\end{aligned}
\end{equation}
\end{theorem}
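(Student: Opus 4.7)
My plan is to view $I_j^T$ as the transformation of the random channel gain through the Shannon formula, derive the induced density, and then integrate. Concretely, since a packet of size $S_j$ transmitted at rate $B_j^{I\to A}$ takes time $S_j/B_j^{I\to A}$, I would start from
\begin{equation*}
I_j^T \;=\; \frac{S_j}{B_j^{I\to A}} \;=\; \frac{S_j}{W\log_2\!\bigl(1+\alpha X\bigr)},
\qquad \alpha \;=\; \frac{P_j\,(d_j^{IA})^{-l}}{N^2},
\end{equation*}
where $X$ is the squared channel gain. Because the amplitude is Rayleigh distributed, $X$ is exponential; I would take $X\sim \mathrm{Exp}(1)$ (absorbing any scale into $C$), so $f_X(x)=e^{-x}$ for $x\ge 0$.

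Next I would apply the one-to-one change of variables $t=g(x):=S_j/(W\log_2(1+\alpha x))$, which is strictly decreasing in $x$. Inverting gives $x=\bigl(\exp(S_j\ln 2/(Wt))-1\bigr)/\alpha$, and differentiating yields
\begin{equation*}
\Bigl|\tfrac{dx}{dt}\Bigr| \;=\; \frac{S_j\ln 2}{\alpha\,W\,t^{2}}\,\exp\!\Bigl(\tfrac{S_j\ln 2}{W t}\Bigr).
\end{equation*}
The density of $I_j^T$ is therefore $f_{I_j^T}(t)=f_X(x(t))\,|dx/dt|$, and a short calculation gives
\begin{equation*}
f_{I_j^T}(t) \;=\; \frac{S_j\ln 2}{\alpha\,W\,t^{2}}\,
\exp\!\Bigl(\tfrac{S_j\ln 2}{W t}\Bigr)\,
\exp\!\Bigl(\tfrac{1-\exp(S_j\ln 2/(Wt))}{\alpha}\Bigr).
\end{equation*}

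Finally I would plug this into $\mathbb{E}[I_j^T]=\int_0^\infty t\,f_{I_j^T}(t)\,dt$. The leading factor of $t$ cancels one power of $t$ from the $1/t^2$ in the density, producing the measure $dt/t$ that appears in the stated result, and substituting $1/\alpha=N^2 (d_j^{IA})^{l}/P_j$ into the prefactor reproduces $S_j\ln(2)N^2 d_j^{IA\,l}/(WP_j)$, matching the theorem statement exactly.

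The main obstacle I anticipate is purely bookkeeping: the Jacobian, the exponential transformation, and the constant $\alpha$ have to be tracked carefully so that the two exponentials combine into the single composite exponential that appears inside the integral, and so that the factor $dt/t$ (rather than $dt$ or $dt/t^{2}$) emerges. The integral itself has no elementary closed form, which is consistent with the theorem leaving it in integral form; so the proof reduces to verifying the transformation and the algebraic simplification rather than evaluating a new integral.
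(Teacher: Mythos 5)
Your proposal is correct and follows essentially the same route as the paper's Appendix A: both treat $I_j^T$ as the monotone-decreasing image of an exponentially distributed channel gain under the Shannon rate formula, obtain the induced density (you via the Jacobian change-of-variables formula, the paper via differentiating the survival-function form of the CDF — the same computation), and then integrate $t\,f_{I_j^T}(t)$ so that the $1/t^2$ collapses to the $dt/t$ measure in Equation (\ref{eq:CM10}). Your identification $1/\alpha = N^2 (d_j^{IA})^{l}/P_j$ reproduces the stated prefactor exactly, so no gap remains.
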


\begin{proof} The proof is shown in Appendix A
\end{proof}

\paragraph {Analysis of $\mathbb{E}\left[V_j^T\right]$ }
Given the transmission rate of processed packets from the edge server to vehicular fogs in Equation (\ref{eq:CM2}) and considering a Rayleigh distribution \cite{weinberg2016optimal} that the transmission rate is a random variable. Based on the following theorem, we can determine how long the data packet should take from the edge server to the vehicular fog during the transmission subsystem.

\begin{theorem}\label{Th_2}
The time it takes to successfully transmit processed packet from the edge server to vehicular fog is calculated as
\begin{equation}
\begin{aligned}\label{eq:CM11}
\resizebox{0.4\textwidth}{!}{$\mathbb{E}\left[V_j^T\right]=\frac{Y_j\ln(2)N^2 d^{A V^l}}{W P_A}\int_0^{\infty} \exp \left(\frac{Y_j\ln(2)}{W t}+\frac{1-\exp \left(\frac{Y_j\ln(2)}{W t}\right)}{P_A N^{-2} d^{A V^{-l}}}\right) \frac{d t}{t}$}.
\end{aligned}
\end{equation}
\end{theorem}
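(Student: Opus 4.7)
The plan is to mirror the argument of Theorem~\ref{Th_1}, trading the IoT-to-edge parameters $(S_j, P_j, d_j^{IA})$ for the edge-to-vehicle parameters $(Y_j, P_A, d^{AV})$. The key observation is that $V_j^T = Y_j / B_j^{A\to V}$, so the randomness in $V_j^T$ comes entirely from the Rayleigh-distributed channel gain $C$ appearing in Equation~(\ref{eq:CM2}). Treating (as in Theorem~\ref{Th_1}) the squared-amplitude gain as unit-mean exponential, the chain of equivalences $\{V_j^T > t\} = \{B_j^{A\to V} < Y_j/t\} = \bigl\{C < (2^{Y_j/(Wt)}-1)\,N^2 / (P_A (d^{AV})^{-l})\bigr\}$ yields the closed-form CDF $F_{V_j^T}(t) = \exp\!\bigl(-(2^{Y_j/(Wt)}-1)\,N^2 / (P_A (d^{AV})^{-l})\bigr)$.

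Next I would differentiate $F_{V_j^T}$ in $t$. Using the chain rule and the identity $2^x = e^{x\ln 2}$, the derivative pulls out a factor $\frac{Y_j\ln 2}{W t^{2}}\exp\!\bigl(\frac{Y_j\ln 2}{Wt}\bigr)$, giving the density $f_{V_j^T}(t) = \frac{Y_j (\ln 2)\, N^{2} (d^{AV})^{l}}{W P_A\, t^{2}}\,\exp\!\bigl(\frac{Y_j\ln 2}{Wt}\bigr)\,F_{V_j^T}(t)$. Substituting into $\mathbb{E}[V_j^T] = \int_0^{\infty} t\, f_{V_j^T}(t)\, dt$ cancels one power of $t$, and rewriting $-(2^{Y_j/(Wt)}-1)/\bigl(P_A N^{-2}(d^{AV})^{-l}\bigr)$ as $\bigl(1-\exp(Y_j\ln 2/(Wt))\bigr)/\bigl(P_A N^{-2}(d^{AV})^{-l}\bigr)$ inside $F_{V_j^T}$ produces exactly the integral on the right-hand side of Equation~(\ref{eq:CM11}).

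The only real obstacle is arithmetic bookkeeping: keeping the exponent sign, the $N^{2}$ versus $N^{-2}$ flip, and the $(d^{AV})^{l}$ versus $(d^{AV})^{-l}$ flip consistent with the formula as written. No integral evaluation or tail estimate is required, since the claim leaves the expectation as an explicit integral in $t$. Because the derivation is structurally the source-to-monitor $\mapsto$ monitor-to-destination relabelling of Theorem~\ref{Th_1}, the cleanest writeup is either to reproduce the Appendix~A steps verbatim with the substitutions $S_j \to Y_j$, $P_j \to P_A$, $d_j^{IA} \to d^{AV}$, or simply to invoke Theorem~\ref{Th_1} and note that the only change is this triple of parameters.
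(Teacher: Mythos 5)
Your proposal is correct and follows essentially the same route as the paper's Appendix B: invert the relation $V_j^T = Y_j/B_j^{A\to V}$ to express the channel gain $C$ as a function of the transmission time, use the exponential law of $C$ to obtain the CDF, differentiate for the density, and substitute into $\mathbb{E}[V_j^T]=\int_0^\infty t\,f(t)\,dt$ so that one power of $t$ cancels. The paper likewise presents this proof as the verbatim relabelling $S_j\to Y_j$, $P_j\to P_A$, $d_j^{IA}\to d^{AV}$ of Theorem~\ref{Th_1}, exactly as you suggest.
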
 
\begin{proof} The proof is shown in Appendix B
\end{proof}

\subsubsection{Analysis of $\mathbb{E}\left[L_j^P\right]$ }
A freshly received packet must wait in the processing queue until the completion of the packets currently on the server. Let $Pr_B$ be the probability of busy periods on the server. Using Little’s law \cite{harchol2013performance}, we can compute $Pr_B$ as
\begin{equation}
\begin{aligned}\label{eq:CM12}
\resizebox{0.3\textwidth}{!}{$P r_{\mathbf{B}}=\sum_{j=1}^J \lambda_j \mathbb{E}\left[A_j^P\right]=\sum_{j=1}^J \lambda_j \frac{1}{\mu^A}$}.
\end{aligned}
\end{equation}
Based on this analysis, we can formulate the expected waiting time $\left(\mathbb{E}\left[L_j^P\right]\right)$ of a packet in the processing queue using the following theorem.

\begin{theorem}\label{Th_3}
In the processing queue for packets from $j$th IoT device, the waiting time is formulated as
\begin{equation}
\begin{aligned}\label{eq:CM13}
\mathbb{E}\left[L_j^P\right]=\frac{\Sigma_{j=1}^J \rho_j^2 / \lambda_j}{2\left(1-\Sigma_{i=1, i\neq j}^J \rho_i\right)\left(1-\beta \Sigma_{i=1, i\neq j}^{J-1} \rho_i\right)},
\end{aligned}
\end{equation}
\end{theorem}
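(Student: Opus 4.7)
My approach is to derive $\mathbb{E}[L_j^P]$ via the classical M/G/1 residual-life argument, adapted to the multi-source setup of Section~\ref{System_Model}. Because source-$j$ packets arrive as an independent Poisson process, I will invoke PASTA so that a tagged source-$j$ arrival observes the time-averaged state of the processing subsystem. Its waiting time then decomposes into two pieces: the residual service time of whatever packet the server is currently processing, and the cumulative service time of every packet already queued ahead of it.

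The first step is to compute the residual-service contribution. The server is busy with probability $Pr_B = \sum_{j=1}^J \lambda_j/\mu^A$ from Equation~(\ref{eq:CM12}); since service times are memoryless with rate $\mu^A$, the expected residual equals $1/\mu^A$, and multiplying yields a term of the form $\sum_j \rho_j/\mu^A = \sum_j \rho_j^2/\lambda_j$, matching the numerator of the claimed expression. The second step is to quantify the queued-work term: applying Little's law separately to the processing queue of each source $i$ gives an expected number of waiting source-$i$ packets of $\lambda_i \mathbb{E}[L_i^P]$, so the expected total queued work seen by the tagged arrival is $\sum_i \rho_i\,\mathbb{E}[L_i^P]$.

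The third step is to solve the resulting fixed-point equation. Separating the self-contribution (source $j$) from the cross-contributions (sources $i \neq j$) and applying Pollaczek--Khinchine-style algebra produces one of the denominator factors, $(1 - \sum_{i \neq j}\rho_i)$. The second factor, $(1 - \beta\sum_{i \neq j}^{J-1}\rho_i)$, should arise from the coupling between the transmission and processing subsystems analyzed in Theorems~\ref{Th_1} and \ref{Th_2}: the parameter $\beta$ rescales the effective load of the competing sources to account for packets still traversing the wireless channel when the tagged source-$j$ packet reaches the processing queue, and the truncation at $J-1$ reflects omitting the source whose packet currently occupies the upstream transmission slot.

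The main obstacle I anticipate is the rigorous justification of the second $(1 - \beta\sum \rho_i)$ factor, since a direct M/G/1 priority-style computation would deliver only a single $(1-\rho_{-j})$ factor. To obtain the refined expression I plan to condition on which source's packet is in transmission at the instant the tagged source-$j$ packet arrives, recompute the effective residual and queued-work contributions under each conditioning using the transmission-time distributions behind Equations~(\ref{eq:CM10}) and (\ref{eq:CM11}), and recombine the pieces via the law of total expectation; the closed form of $\beta$ should then drop out of the moment-generating-function ratio between the two subsystems.
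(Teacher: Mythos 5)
Your overall skeleton (residual service of the packet currently in service plus the work queued ahead of the tagged arrival, closed via Little's law into a fixed-point equation) is the same mean-value recursion the paper uses, but two of your concrete steps would not reproduce Equation (\ref{eq:CM13}). First, the factor of $2$ in the denominator: you justify the residual-service term by memorylessness, taking the expected residual to be $1/\mu^A$, which makes the first term $Pr_B\cdot(1/\mu^A)=\sum_j\rho_j^2/\lambda_j$ and hence yields a final expression exactly twice the stated one. The paper instead evaluates the residual by the renewal-reward (time-average excess) formula $\mathbb{E}\left[A_R^P\right]=\mathbb{E}\left[(A^P)^2\right]/\left(2\,\mathbb{E}\left[A^P\right]\right)$, so that $Pr_B\,\mathbb{E}\left[A_R^P\right]=\frac{1}{2}\sum_j\rho_j^2/\lambda_j$; the $2$ in (\ref{eq:CM13}) is precisely this excess factor and cannot be recovered from your memorylessness assumption, so you need to commit to the second-moment computation rather than an exponential service model.

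Second, and more seriously, your account of the factor $\left(1-\beta\sum_{i\neq j}^{J-1}\rho_i\right)$ misreads what $\beta$ is. In the paper $\beta$ is nothing but an indicator ($\beta=1$ if $J>1$, $\beta=0$ if $J=1$); it does not rescale effective loads, and the second factor has nothing to do with the transmission subsystem, the distributions behind Equations (\ref{eq:CM10}) and (\ref{eq:CM11}), or which source currently occupies the wireless channel. It arises entirely inside the processing-queue recursion: the paper first solves the single-source balance equation to get $\mathbb{E}\left[L_1^P\right]=Pr_B\,\mathbb{E}\left[A_R^P\right]/(1-\rho_1)$, then substitutes this into the analogous equation for $j\ge 2$, whose queued-work term sums the lower-indexed sources, and that substitution is what multiplies the single $(1-\sum_i\rho_i)$ factor by the second one; $\beta$ merely unifies the $J=1$ and $J\ge 2$ cases into one formula. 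Your planned route --- conditioning on the transmission state and extracting $\beta$ from a moment-generating-function ratio between the two subsystems --- is aimed at a coupling that the paper's model does not invoke and would not terminate in the stated closed form. The missing idea is simply the recursive elimination of $\mathbb{E}\left[L_i^P\right]$ for the competing sources within the processing queue itself.
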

where $\beta$ is indicator variable,$\beta= \begin{cases}1, & \text { if } J>1, \\ 0, & \text { if } J=1,\end{cases}$\\
$\rho_j=\lambda_j \mathbb{E}\left[A_j^P\right]$
the load caused by data packets from $j$th IoT device.
\begin{proof} The proof is shown in Appendix C
\end{proof}

\subsubsection{Analysis of $\mathbb{E}\left[L_j^T\right]$ }
We take into consideration M/G/1 systems \cite{shore1982information}, in which customers come from an infinite customer pool with independent, exponentially distributed inter-arrival times, wait in a queue with an infinite capacity, are independently served by a single server with a general service time distribution, and then go back to the client pool \cite{prabhu1987bibliography}.
The estimated waiting time spent in the transmission queue was estimated by using the principle of maximum entropy (PME) \cite{shore1982information}. Based on this principle, we can derive the following theorem.
\begin{theorem}\label{Th_4}
In the transmission queue, the estimated time spent waiting for packets originally generated from the $j$th IoT device is calculated as
{\small
\begin{equation}
\begin{aligned}\label{eq:CM14}
\mathbb{E}\left[L_j^T\right]=\frac{\left(\Sigma_{j=1}^J \lambda_j \mathbb{E}\left[I_j^T\right]\right)^2}{\sum_{i=1, i\neq j}^J \lambda_i\left(1-\Sigma_{i=1, i\neq j}^J \lambda_i \mathbb{E}\left[I_j^T\right]\right)}.
\end{aligned}
\end{equation}}
\end{theorem}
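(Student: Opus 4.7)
The plan is to model the transmission subsystem as an M/G/1 queue fed by the superposition of $J$ independent Poisson streams (one per IoT device) with general service times $I_j^T$ characterized by Theorem~\ref{Th_1}, and then to invoke the Principle of Maximum Entropy (PME) of Shore to recover the stationary queue-occupancy distribution from only the macroscopic information that is actually known about the service-time law. First I would enumerate the constraints available to PME: the normalization $\sum_n p_n = 1$, the server utilization $\rho = \sum_{j=1}^J \lambda_j \mathbb{E}[I_j^T]$ already obtained in Equation~(\ref{eq:CM12}), and the mean system occupancy, which can be read off directly from Little's Law applied to the server in isolation.

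Next I would carry out the entropy maximization by forming the Lagrangian $\mathcal{L} = -\sum_n p_n \ln p_n - \alpha_0(\sum_n p_n - 1) - \alpha_1(\sum_n n\, p_n - \mathbb{E}[N])$, differentiating with respect to $p_n$, and solving the resulting stationarity condition. The maximizer has the exponential form $p_n \propto e^{-\alpha_1 n}$; enforcing the two constraints pins the multipliers so that $p_n = (1-\rho)\rho^n$, a geometric distribution parametrised by the aggregate utilization. From this I would extract the expected queue length $\mathbb{E}[L] = \mathbb{E}[N] - \rho = \rho^{2}/(1-\rho)$, which already reproduces the $\bigl(\sum_{j=1}^J \lambda_j \mathbb{E}[I_j^T]\bigr)^{2}$ factor appearing in the numerator of Equation~(\ref{eq:CM14}).

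The final step is to convert this aggregate mean queue length into the waiting time experienced by a packet tagged as originating from source $j$. By PASTA together with the independence of the superposed Poisson streams, the tagged packet sees only the residual load from the other $J-1$ streams upon arrival; applying Little's Law to that interference-only subsystem with effective arrival rate $\sum_{i\neq j}\lambda_i$ and interference utilization $\sum_{i\neq j}\lambda_i \mathbb{E}[I_j^T]$ rewrites $\rho^{2}/(1-\rho)$ as the asymmetric quotient of Equation~(\ref{eq:CM14}).

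The main obstacle will be justifying the self-exclusion in the denominator while the numerator retains the full sum. Since the PME state distribution is inherently aggregate, the per-source decomposition requires a careful tagged-customer argument: I would condition on the identity of the packet currently in service and on the identities of the packets queued ahead of the tagged arrival, and exploit the stream independence to show that deleting the $j$th stream preserves the geometric form at precisely the occupancy level relevant to the tagged packet's delay. This asymmetric decomposition is the essential departure from the textbook Pollaczek--Khinchine derivation and is where I expect the proof to require the most care.
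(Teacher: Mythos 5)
Your route is essentially the one the paper takes: maximum-entropy inference gives a geometric occupancy law with parameter $\rho_T=\sum_{j=1}^J\lambda_j\mathbb{E}[I_j^T]$, from which $\mathbb{E}[X^T]=\rho_T/(1-\rho_T)$, and Little's law converts the mean queue content into a waiting time. The only cosmetic difference is that you pin the Lagrange multiplier with a mean-occupancy constraint, while the paper pins it by matching $\Pr(X^T=0)=1-\rho_T$ in Equation (\ref{eq:CM44}); both yield the same geometric form and the same $\rho_T^2$ numerator. The obstacle you flag at the end, however, is real and is not resolved in the paper either: substituting the paper's Equations (\ref{eq:CM40}) and (\ref{eq:CM45}) into (\ref{eq:CM41}) and solving for $\mathbb{E}[L_j^T]$ produces $\bigl(\sum_{j=1}^J\lambda_j\mathbb{E}[I_j^T]\bigr)^2\big/\bigl[\sum_{j=1}^J\lambda_j\bigl(1-\sum_{i=1}^J\lambda_i\mathbb{E}[I_j^T]\bigr)\bigr]$ with \emph{full} sums throughout, and the passage to the $i\neq j$ sums in the denominator of Equation (\ref{eq:CM14}) is asserted without argument. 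So the tagged-customer analysis you sketch would be doing work the paper skips entirely; as it stands, what actually follows from the PME/Little's-law derivation (yours or the paper's) is the symmetric expression, not the asymmetric one in the theorem statement, and you should either supply that self-exclusion argument in full or state the theorem with the full sums.
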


\begin{proof} The proof is shown in Appendix D
\end{proof}

\section{Problem Formulation and Proposed Solution}
\label{sec:Problem_Formulation}

\subsection{AoI Optimization Problem Formulation}
We considered an edge-enabled vehicular fog system, which consists of multiple IoT devices, an edge server, and a vehicular fog. We aim to minimize the average end-to-end AoI ($\Delta)$ of edge-enabled vehicular fog systems. To achieve this, we formulated the average end-to-end AoI ($\Delta)$ as
\begin{equation}
\Delta=\frac{1}{T \times J} \sum_{j=1}^J \Delta_j,
\end{equation}
where $\Delta_j$ indicates the average end-to-end AoI of $j$th IoT device updates over the long term, given in Equation (\ref{eq:CM9}), $\lambda_J$ is the updated arrival rates, and $T$ the total time slot. Hence, the corresponding problem is stated as
\begin{equation}
\min _{\lambda_J}(\Delta),
\end{equation}
subject to
\begin{subequations}
\begin{equation}
\begin{aligned}\label{eq:CM16a}
\quad \sum_{j=1}^J \lambda_j \mathbb{E}\left[A_j^P\right]<1,
\end{aligned}
\end{equation}
\begin{equation}
\begin{aligned}\label{eq:CM16b}
\sum_{j=1}^J \lambda_j \mathbb{E}\left[I_j^T\right]<1,
\end{aligned}
\end{equation}
\begin{equation}
\begin{aligned}\label{eq:CM16c}
\sum_{j=1}^J \lambda_j \mathbb{E}\left[V_j^T\right]<1,
\end{aligned}
\end{equation}
\begin{equation}
\begin{aligned}\label{eq:CM16d}
\lambda_j>0 \quad \forall j \in J.
\end{aligned}
\end{equation}
\end{subequations}
The constraint in Equation (\ref{eq:CM16a}) guarantees the stability of the processing capacity, which is only one packet at a time. The inequalities in Equations (\ref{eq:CM16b}) and (\ref{eq:CM16c}) also ensure that the transmission channel will serve only one packet at a time (transmission from IoT device to edge server and, from the edge server to vehicular fog), respectively. The constraint in Equation (\ref{eq:CM16d}) guarantees the non-negativity of each packet arrival.

\subsection{Proposed Solution}
Our objective is to obtain the optimal policy that specifies the actions taken at different states of the system over time, achieving the minimum average end-to-end AoI at the destination. To solve this optimization problem, we used a modified DRL function. This function is known as an agent, which interacts with the edge-enabled vehicular fog environment to find optimal solutions. The agent can be placed on an edge server to offload information to vehicular fogs. A DRL agent is trained using a replay memory buffer that stores information in the form of, states, actions, rewards, and next-states $(s_t,a_t,r_t,s_t^\prime)$. The agent learns the optimal action to take given a particular environmental state. The state represents the current condition of the edge-enabled vehicular fog system. The probability of offloading processed packets to vehicular fog is the action taken by the DRL agent. The optimal action aims to minimize the average end-to-end AoI.
\subsubsection{Edge-enabled Vehicular Fog Environment setup} 
The offloading decisions in edge-enabled vehicular fog environments require appropriate decisions to be made at each moment in time to achieve their objectives. We have implemented DRL-based agents to interact with the environment and make quick decisions. We define the agent, states, actions, and rewards for AoI optimization in the designed environment as follows.
\paragraph{\textbf{Agent}} Agent is a decision-maker who interacts with edge-enabled vehicular fog environments.

\paragraph{\textbf{Environment}} Edge-enabled vehicular fog systems topology ($I_j,A,V$).

\paragraph{\textbf{State ($s_{t}$)}} State is a representation of the current environment that the agent can observe. It includes all relevant information necessary for decision-making, such as current topology ($I_j, A, V$), packet arrival ($\lambda_j$), computation resources for edge server ($\mu_j^A$), and vehicular fog ($\mu^V$), communication resources between IoT devices and edge server ($B_{j}^{\mathrm{I \rightarrow A}}$) and edge server and vehicular fogs ($B^{\mathrm{A \rightarrow V}}$), and the AoI ($\Delta$). The state arrays are constructed from the states and used as inputs to the algorithms: {\small $AP_t=\left\{\lambda_j\right\}_{j=1}^J$, $C_t=\left\{\left\{\mu_j^A\right\}_{j=1}^J, \mu^{V}\right\}$, $B_t=\left\{\left\{B_j^{I \rightarrow A}\right\}_{j=1}^J , B^{A \rightarrow V}\right\}$, $\Delta_t=\{\Delta\}_{j=1}^J$},
where $AP_t$ is the set of packet arrival rates to all edge servers at time $t$. $C_t$ is an array of the computing capacities at the edge server and vehicular fogs, and $B_t$ is the set of communication capacities between IoT devices, edge servers, and vehicular fogs. $\Delta_t$ set of the AoI of the system. $AP_t$, $C_t$, $B_t$, and $\Delta_t$ are combined to form the environment state at time $t$. Then the system environment state $s_t$ at time $t$ can be represented as $s_t=\left[A P_t, C_t, B_t, \Delta_t\right]$.
\paragraph{\textbf{Action}} An action refers to the potential offloading decisions that the agent makes and applies in the edge-enabled vehicular fog environment. Since the offloading decision is binary, then  $P_j^{A \rightarrow V}\in\left\{0,1\right\}$. At time $t$ the action $a_{t}$  can be expressed as $a_{t}=\left\{P_j^{A \rightarrow V}\right\}_{j=1}^{V}$. Let $\mathcal{A}$ be a set of offloading decisions to be sent to the requested vehicular fogs, where $\mathcal{A} =\left\{a_{1,t}, \cdots,a_{V,t}\right\}$.

\paragraph{\textbf{Reward ($r_{t}$)}} Our objective is to maximize the cumulative rewards $r_t$, it receives over time while minimizing end-to-end AoI. So, the reward function $\lim _{T \rightarrow \infty} \frac{1}{T} \sum_{t=0}^T r_t$ should be related to the goal of our optimization problem, which is given as
{\small
\begin{equation}
\begin{aligned}\label{eq:CM17}
r_t=\left(\frac{\Delta_t}{J \times T}\right),
\end{aligned}
\end{equation}}
where $t$ is the episode at $t$th time.
\subsubsection{Proposed Dueling-DQN Algorithm}
The dueling-DQN algorithm is a model-free DRL approach that predicts future states and rewards within an environment without learning the transition function of the environment \cite{wang2016dueling}. The primary objective of this algorithm is to determine the best policy. To achieve this, the neural network used in the dueling-DQN algorithm divides its last layer into two sections: one section estimates the expected reward collected from a specific state, called the state value function ($V(s_{t})$), while the other section measures the relative benefit of one action compared to others, called the advantage function ($A(s_{t}, a_{t})$). In the end, it merges the two components into a single output, which estimates a state-action value $Q(s_{t},a_{t})$. It can be computed as

\begin{equation}
\begin{aligned}\label{eq:CM18}
Q^\pi(s_{t}, a_{t})=V^\pi(s)+A^\pi(s_{t}, a_{t}).
\end{aligned}
\end{equation}
The expectation of the advantage function is zero, and the value function $Q$ in Equation (\ref{eq:CM18}) is replaced using Equation (\ref{eq:CM181}), where $A(s_{t}, a_{t}^\prime)$ is a $|\mathcal{A}|$-dimensional vector for a set of actions
{\small
\begin{equation}
\begin{aligned}\label{eq:CM181}
Q^\pi(s_{t}, a_{t})=V^\pi(s_{t})+\left(A^\pi(s_{t}, a_{t})-\frac{1}{|\mathcal{A}|} \sum_{a_{t}^{\prime}} A\left(s_{t}^\prime, a_{t}^{\prime}\right)\right),
\end{aligned}
\end{equation}}
where policy $\pi$ is the distribution connecting state to action. Let $Q(s_{t},a_{t};\theta,\phi,\varphi)$ be the value function with parameters $\theta$, which is expressed as
\begin{equation}
\begin{aligned}\label{eq:CM19}
Q(s_{t}, a_{t} ; \theta, \emptyset, \varphi)=V(s_{t} ; \theta, \emptyset)+A(s_{t}, a_{t} ; \theta, \varphi),
\end{aligned}
\end{equation}
where $\theta$ is parameter of the convolutions layer, $\emptyset$ is the action-value function parameter of fully-connected, $\varphi$ is advantage parameter of fully-connected.
The dueling DQN agent observes and interacts with the environment. The target $Q$-value ($Q_t$) of the target $Q$-network given as
\begin{equation}
\begin{aligned}\label{eq:CM20}
Q_t=r_{t}+\gamma \max _{a_{t}^{\prime}}{Q}\left(s_{t}^{\prime}, a_{t}^{\prime}; \theta_{\text {targ}}, \emptyset, \varphi\right).
\end{aligned}
\end{equation}
Then, the loss function $L(\theta)$ in dueling-DQN is
\begin{equation}
\begin{aligned}\label{eq:CM21}
L(\theta)=\mathbb{E}\left[\left(Q_t-Q(s_{t}, a_{t} ; \theta, \emptyset, \varphi)\right)^2\right],
\end{aligned}
\end{equation}
where $\mathbb{E}$ denotes the expected value. Finally, the action-value function $Q(s_{t},a_{t};\theta,\emptyset,\varphi)$ is updated as
{\small
\begin{equation}
\begin{aligned}\label{eq:CM22}
Q(s, a ; \theta, \emptyset, \varphi)=Q(s_{t}, a_{t} ; \theta, \emptyset, \varphi)+\mathbb{E}\left[Q_t-Q(s_{t}, a_{t} ; \theta, \emptyset, \varphi)\right].
\end{aligned}
\end{equation}}

\begin{algorithm}[!t]
\caption{Proposed Dueling-DQN Algorithm}
\label{alg:DuelingDQN}
   \LinesNumbered
    Input: A set of IoT devices, computing capacity, vehicular fog, and packet arrival.

    Output: The target $Q$-value ($Q_{t}$) at each time step.
     
    Initial current $Q$-function parameters: $Q(s_{t}, a_{t}; \theta)$
 
    Initial target $Q$-function parameters: $Q\left(s_{t}^{\prime}, a_{t}^{\prime}; \theta_{\text {targ}}\right)$
      
    Initialize replay memory $\mathcal{M}$ to capacity $N$,  
    $Episode\_max$, $Policy\_update\_frequency$,  
    $Training\_step =T$, $Total\_step$, set $\max$ and initial $\varepsilon$

    Set target parameters equal to main parameters $\theta_{\text {targ }} \leftarrow \theta$
          
    \For{$Initial\_episode =1,$ to $Max\_Episode$}{
        done $=$ False
              
        Initialize the starting state $s_{0}$
        
       \For{$t=1,$ to $T$}{
            Observe the environment state $s_{t}$,
            with probability $\varepsilon$ select a random action $a_t$
            otherwise select $a_t=\underset{a_{t}}{\operatorname{argmax}} Q\left(s_t, a_{t} ; \theta\right)$
                  
            Execute $a_{t}$ in the environment

            Calculate reward $r_{t}$ using Equation (\ref{eq:CM17})
            
            Observe next state $s_{t}^{\prime}$, and done signal $d$ to indicate whether $s_{t}^{\prime}$ is terminate;
                  
            Store $\left(s_{t}, a_{t}, r_{t}, s_{t}^{\prime}, d\right)$ in replay memory $\mathcal{D}$
                  
            $s_{t} \leftarrow s_{t}^{\prime}$
                  
           \If {it's time to update}{
                  
               \For {i in range(Max\_Episode)}{  \# mini-batch process
                    randomly sample a batch of transitions, initial $\varepsilon$, $B=\left(s_{i}, a_{i}, r_{i}, s_{i}^{\prime}, d\right)$ from $\mathcal{M}$

                    Compute targets\\ 
                    $Q_{t}=\begin{cases}\text{Equation}(\ref{eq:CM17}),&\text{if episode terminates }\\ 
                    \text{Equation}(\ref{eq:CM20}),&\text{otherwise.}
                    \end{cases}$
                        
                    Reset gradient $L\left(\theta\right)$ to 0
                        
                    Perform a gradient descent step on according to Equation (\ref{eq:CM21}), and update  
                    $\theta \leftarrow \theta+\alpha\nabla_\theta L\left(\theta\right)$
                        
                    \If {$Total\_step \mod $ $Policy\_update\_frequency$ ==$0$}{
                        Update action-value function networks using Equation (\ref{eq:CM22})
                    
                    }
                          
                    $\theta_{\text {targ}} \leftarrow \theta$
                          
                    Update $\varepsilon$
                }
            }
        }
   }
\end{algorithm}

\begin{figure}[!t]
  \centering
  \includegraphics[width=9cm,height=7cm]{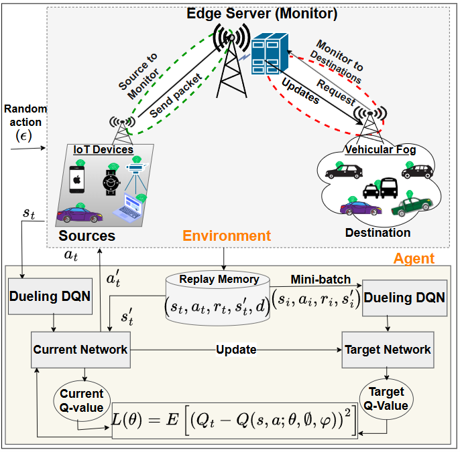}
  \caption{The exchange of information between the environment and the proposed dueling-DQN algorithm.}\label{fig:algo}
\end{figure}

To achieve optimal offloading of real-time updating information in an edge-enabled vehicular fog environment, we propose Algorithm \ref{alg:DuelingDQN}, utilizing a modified dueling-DQN algorithm. The dynamic changes in the environment bring about variations in agent actions and state spaces at each step of the process. Due to improved network stability and faster convergence time, managing the dynamics of this process is possible through the proposed algorithm. After several iterations, an end-to-end AoI optimization policy can be obtained to deliver the most optimally updated information to vehicular fog. 
Algorithm \ref{alg:DuelingDQN} works as follows:

 \paragraph {Step 1}The algorithm takes a set of IoT devices, computing capacity, vehicular fog, and packet arrival as input state to return an offloading decision as an output, as shown in lines 1 and 2. The current and target $Q$-function parameters, replay memory, the maximum number of episodes, policy update frequency, the maximum time slots, and the maximum Epsilon greedy are initialized in lines 3-6.
\paragraph {Step 2} In the agent's training process, at each time slot the agent observes the environment state $s_{t}$ and chooses the random action $a_{t}$ with a probability of epsilon-greedy. Then execute the action in the environment to calculate the reward $r_{t}$ and observe the next state $s_{t}^{\prime}$ as shown in lines 11-14. The transition $\left(s_{t}, a_{t}, r_{t}, s_{t}^{\prime}, d\right)$ is stored in the replay memory, and the current state is set to the next state in lines 15-16.
\paragraph {Step 3} To train the policy network, a batch of transition samples is randomly drawn from the mini-batch ($B$), and then, the target network is computed according to Equations (\ref{eq:CM17}) and (\ref{eq:CM20}) in lines 19-21.
\paragraph {Step 4} The gradient of the network parameters, denoted as $L\left(\theta\right)$, is computed using Equation (\ref{eq:CM21}). The network parameter $\theta$ is then updated using the gradient $L\left(\theta\right)$, following the expression $\theta \leftarrow \theta + \alpha\nabla_\theta L\left(\theta\right)$ and then update the action-value network using Equations (\ref{eq:CM22}) in lines 23-28.

Figure \ref{fig:algo} illustrates the exchange of information between the environment and the proposed dueling-DQN algorithm. It comprises three primary components: First, the edge-enabled vehicular fog environment provides information to the agent as a state ($s_{t}$) to determine the offloading decisions ($a_{t}$) at time $t$ iterations. Secondly, the replay memory serves as a storage for transition information, capturing the history of agent-environment interactions as a tuple of state, action, reward, and next state $\left(s_{t}, a_{t}, r_{t}, s_{t}^{\prime}, d\right)$. This transition information is crucial for updating the agent model’s parameters during training. Finally, the agent, the offloading agent utilizes the current network to update both the current $Q$-value and the target network $Q$-value. This updating process is done based on the gradient loss values.

\section{Performance Analysis}
\label{Performance}
\subsection{Simulation Parameter Settings}
In edge-enabled vehicular fog systems, the Python programming language was utilized to create the simulation environment. We developed the edge-enabled vehicular fog environment, following OpenAI Gym's standards \cite{OpenJm16}. To ensure its effectiveness in meeting all required DRL criteria, we conducted rigorous testing using the Stable-Baselines3 DRL implementation \cite{BaseLine321}. This evaluation allowed us to thoroughly assess the system's performance and confirm its suitability for comparison with other DRL algorithms. The proposed solutions were compared with the existing DQN algorithm and analytical results. The parameter configurations were presented as both system parameters and agent parameters.

\begin{table}[!t]
  \centering
  \caption{System parameter settings}
  \label{tab: sysParm}
    \begin{tabular}{|p{4.5cm}|p{3.25cm}|}
    \hline
    \textbf{Parameters descriptions } & \textbf{Value(s)}  \\
    \hline
    Number of Edge server ($A$) & \multicolumn{1}{l|}{1} \\
    \hline
    Number of Vehicular fog ($V$)    & \multicolumn{1}{l|}{1} \\
    \hline
    Number of IoT Devices ($I_j$) & [$2, 3, \cdots, 10$] \\
    \hline
    Distance between $j$th IoT device and Edge server ($d_{j}^{IA}$) & 3KM \\
    \hline
    Distance between edge node and Vehicular-fog ($d^{AV}$) & 10km \\
    \hline
    System Bandwidth ($W$) & 100KHz  = ($100\times10^3  bps$) \\
    \hline
    Background noise ($N$) & -174 dBm/Hz = -124dBm \\
    \hline
    Transmission and Reception power ($p_j, P_A$) & 1000 mW, = (200dBm) \\
    \hline
    Path loss exponent ($l$) & \multicolumn{1}{l|}{3} \\
    \hline
    Time slots ($T$) & [1-10] \\
    \hline
    The capacity of the Edge server ($\mu^A$) & $30Mb/s = 30\times10^4 bps$ \\
    \hline
    The capacity of Vehicular Fog ($\mu^V$) & $15Mbps = 15\times10^4 bps$ \\
    \hline
    Arrival packets from IoT devices ($\lambda_j$) & [10, 20, 40, 60, 80]Mbps \\
    \hline
    The original packet size arrived at the edge server ($S$) & $S=\lambda_j$ \\
    \hline
    Processed packet size from the edge server to vehicular fog ($Y$) & $Y= S\times0.2$ \\
    \hline
    Packet size & 10bits \\
    \hline
    \end{tabular}%
\end{table}%

\subsubsection{System parameter settings}
Table \ref{tab: sysParm} provides system parameter settings, and we examined the effects of IoT device size and different time slots. In the experiments, we considered five arrival traffic rate scenarios. The traffic generated by the IoT devices follows a Poisson distribution, with mean values. The distances between IoT devices and the edge server and between the edge server and vehicular fog are randomly distributed as defined in Table \ref{tab: sysParm}.

\begin{table}[!t]
  \centering
  \caption{Agent hyper-parameter settings}
  \label{tab: HParm}
    \begin{tabular}{|p{4cm}|p{1.5cm}|}
    \hline
    \textbf{Parameters} & \multicolumn{1}{p{8.225em}|}{\textbf{Value}} \\
    \hline
    Learning rate ($\alpha$) & 0.0001 \\
    \hline
    Batch Size & 128 \\
    \hline
    Replay memory size & 100000 \\
    \hline
    Discount factor ($\gamma$) & 0.99 \\
    \hline
    Initial epsilon ($\epsilon$) & 0.5 \\
    \hline
    Max-Episodes &  $10^4$\\ 
    \hline
    Hidden Layer\_1 and Layer\_2 & 256 and 256 \\
    \hline
    \end{tabular}%
\end{table}%

\subsubsection{Agent hyper-parameter settings}
The simulation environment was written in Python, and for DRL policy both (dueling-DQN and DQN) algorithms, the ANNs are implemented using TensorFlow \cite{Abadi2016}. The configurations for the DRL algorithms can be found in Table \ref{tab: HParm}.

\begin{figure}[!t]
    \subfloat[Optimal episode reward of dueling-DQN policy.]{
    \includegraphics[width=0.20\textwidth]{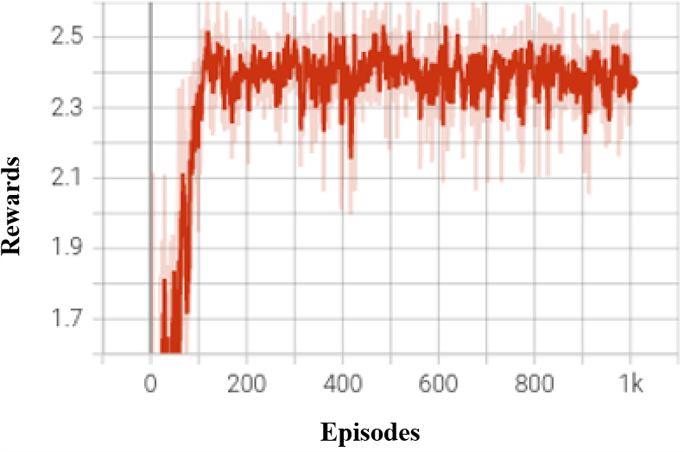}
    \label{fig: RDDQNa}}
    \qquad
    \subfloat[Optimal episode reward of DQN policy.]{
    \includegraphics[width=0.20\textwidth]{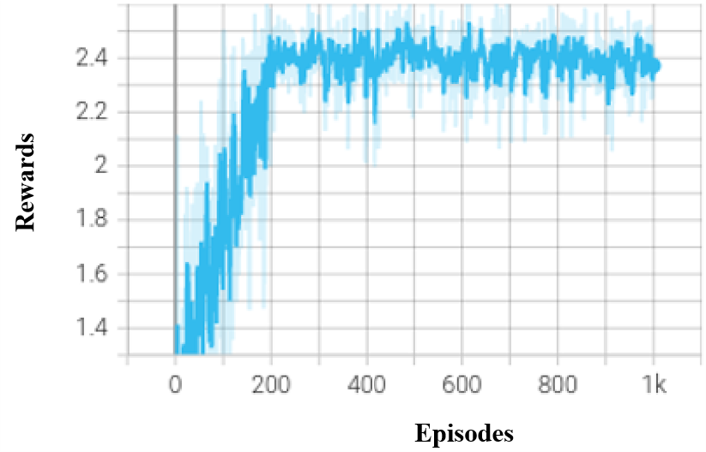}
    \label{fig: RDQNb}}
    \\
    \subfloat[The average convergence of Rewards for DRL Agents.]{
    \includegraphics[width=0.45\textwidth]{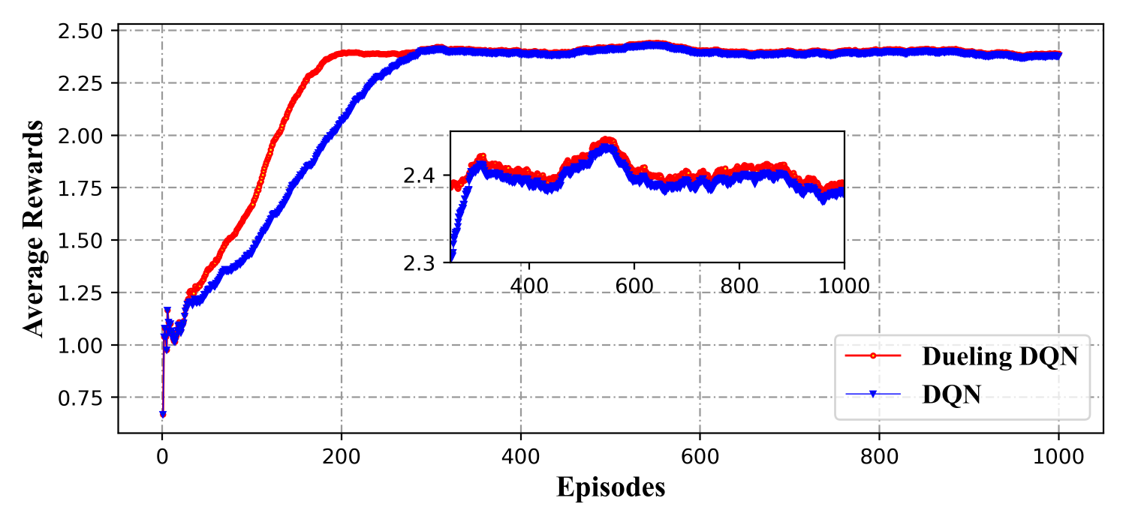} 
    \label{fig: RDRLe}}
    \qquad
    \subfloat[The average convergence AoI for DRL Agents.]{
    \includegraphics[width=0.45\textwidth]{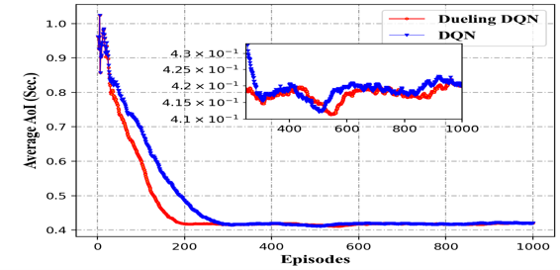}
    \label{fig: AoIDRLf}}
    \qquad
    \\
    \subfloat[Optimal loss of dueling-DQN algorithm.]{
    \includegraphics[width=0.20\textwidth]{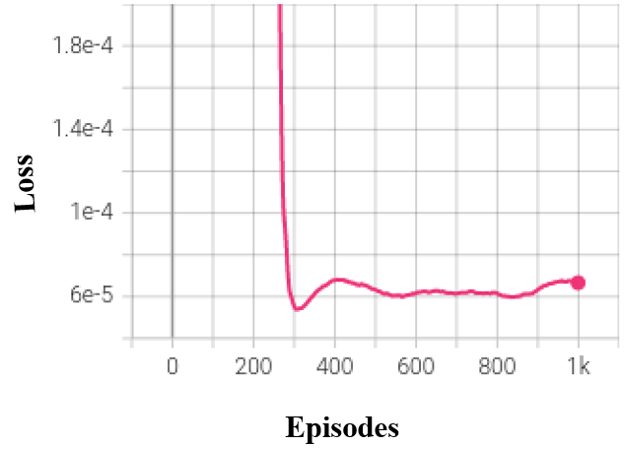} 
    \label{fig: LosDDQNc}}
    \qquad
    \subfloat[Optimal loss of DQN algorithm.]{
    \includegraphics[width=0.20\textwidth]{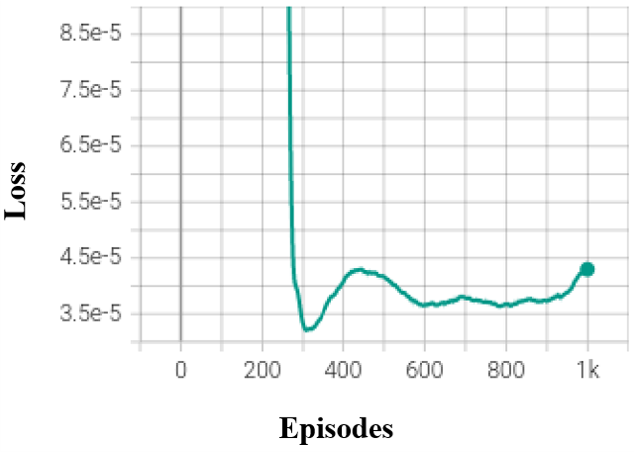} 
    \label{fig: LosDQNd}}
    \qquad
    \subfloat[Optimal average loss of DRL agents algorithm.]{
    \includegraphics[width=0.45\textwidth]{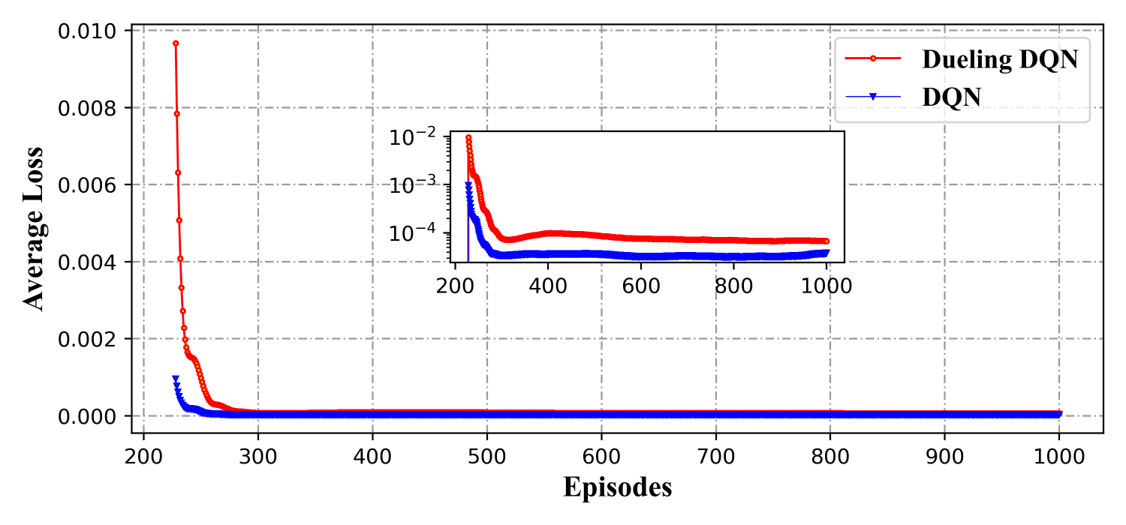} 
    \label{fig: LosDRLg}}
    \caption{The evaluations of convergence of DRL agents.}
    \label{fig: Convergence}
    \end{figure}

\subsection{DRL Agents Convergence Evaluations}

In Figure \ref{fig: Convergence}, we evaluate the performance of the proposed dueling-DQN-based DRL algorithm by assessing its performance regarding reward, loss, and the optimal AoI values achieved by the algorithm. During the evaluation of agent performance, each algorithm is run for 10,000 episodes, and the results are averaged.
The optimal episode rewards from the dueling-DQN and DQN algorithms are shown in Figures \ref{fig: RDDQNa} and \ref{fig: RDQNb}, respectively, where, in contrast to DQN, the proposed dueling-DQN algorithm learns faster and achieves optimal rewards more quickly. The average reward convergence of dueling-DQN and DQN agents is illustrated in Figure \ref{fig: RDRLe}. This shows that the dueling-DQN agent outperforms in terms of optimal rewards over a range of episodes, from 50 to 250, because the dueling-DQN agent can learn a more optimal policy.
However, after converging, both the dueling DQN and DQN agents exhibit similar performance. This similarity arises because, in both algorithms, the epsilon-greedy approach introduces randomness, leading to sub-optimal policy performance. As a result, achieving optimal reward values can be accomplished by reducing the value of epsilon to zero after the algorithm converges. In Figure \ref{fig: AoIDRLf}, the optimal convergence of AoI for DRL agents is illustrated, demonstrating that the dueling-DQN agent exhibits better convergence and optimal AoI values. Due to the utilization of randomness in both algorithms, their convergence performance becomes similar after the agents have converged. 
The losses of the dueling-DQN and DQN agents are shown in Figures \ref{fig: LosDDQNc} and \ref{fig: LosDQNd}, respectively. 
These losses are further averaged and presented in Figure \ref{fig: LosDRLg}. The dueling-DQN agent exhibits fewer outliers and improved stability because DQN is vulnerable to overestimation issues, causing worse training stability and more outliers.

Figure \ref{fig: Eval} illustrates the assessment of the dueling-DQN algorithm concerning different numbers of IoT devices and time slots. In Figure \ref{fig: AOIDIOT}, we analyze the average end-to-end AoI as the number of IoT devices increases while keeping time slots fixed. Conversely, in Figure \ref{fig: AoIDTime}, we assess the average end-to-end AoI with varied time slots while maintaining a constant number of IoT devices. Both Figures \ref{fig: AOIDIOT} and \ref{fig: AoIDTime} demonstrate that the average end-to-end AoI rises with an increasing number of IoT devices and varied time slots. The convergence of average end-to-end AoI evaluation occurs at episode number 200, and with the increase in the number of episodes, there is a noticeable trend towards greater stability and a decrease in AoI.
Based on this result, we can conclude that the dueling-DQN agent is more transferable and that the network shows more training stability. This is due to the dueling-DQN algorithm's ability to learn the dynamics of the environment and make wise decisions.

\begin{figure}[!t]
    \centering
    \subfloat[Evaluation of the dueling-DQN algorithm with an increasing number of IoT devices over the episode, ($I \in \{2, 4, 6, 8, 10\}$, $T$=10, $\lambda_j$ is used 40Mbps)]{
    \includegraphics[width=0.45\textwidth]{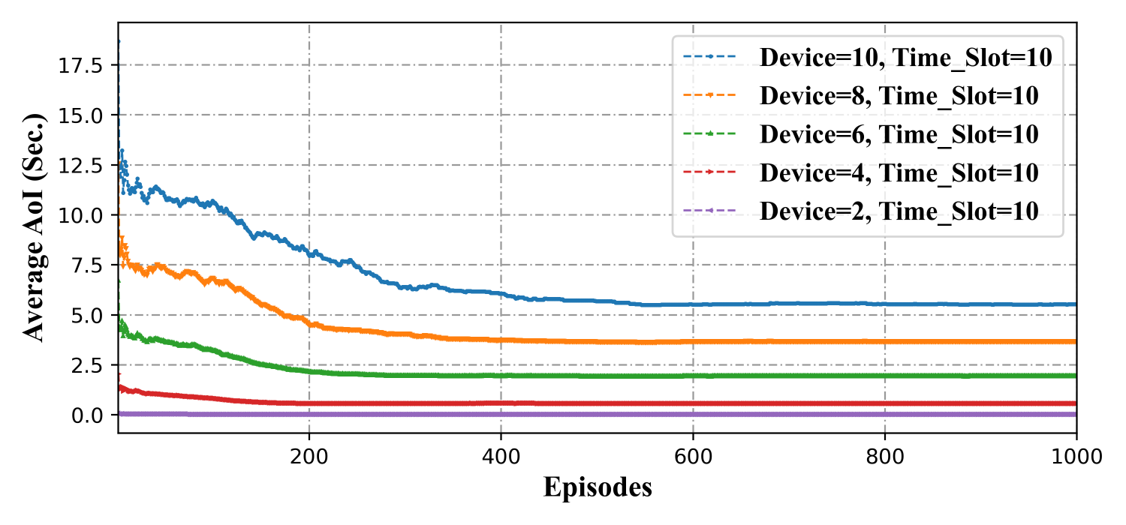}
    \label{fig: AOIDIOT}}
    \qquad
    \subfloat[Evaluation of the dueling-DQN algorithm with different time slots, ($T \in \{2, 4, 6, 8, 10\}$, $I=10$ $\lambda_j$ is used 40Mbps)]{
    \includegraphics[width=0.45\textwidth]{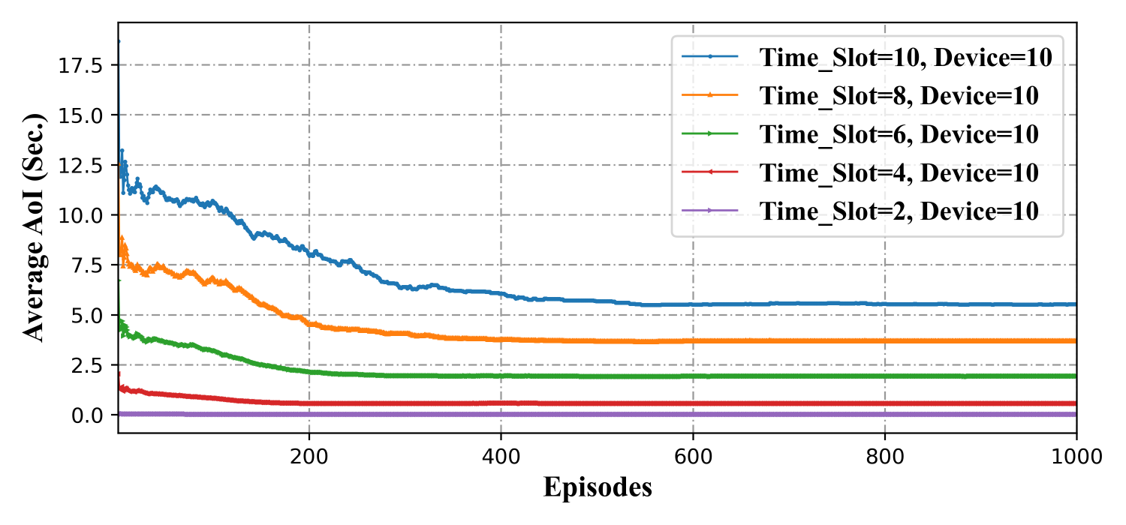}
    \label{fig: AoIDTime}}
    \caption{Evaluation of the dueling-DQN with the increasing number of IoT devices and time slots over the episodes}
    \label{fig: Eval}
\end{figure}    

\subsection{ AoI Performance Analysis}

In Figure \ref{fig: DAOI}, we compared the average end-to-end AoI performance of the system algorithms under an increasing number of IoT devices. The average end-to-end AoI has increased with the rise in the number of IoT devices for all algorithms. This is because as the number of IoT devices increases, the generation time and waiting time in the transmission channel will also increase for each packet. As a result, the average end-to-end AoI value will be higher. However, the dueling-DQN algorithm is shown to perform close to analytical results and outperform the DQN algorithm when the number of IoT devices ranges from 2 to 7. However, when the number of IoT devices exceeds 7, dueling-DQN performs even better. This improvement can be attributed to the fact that dueling-DQN learns value functions without considering actions in each state, resulting in a significant enhancement in network stability and convergence speed.

\begin{figure}[!t]
  \centering
  \includegraphics[width=0.45\textwidth]{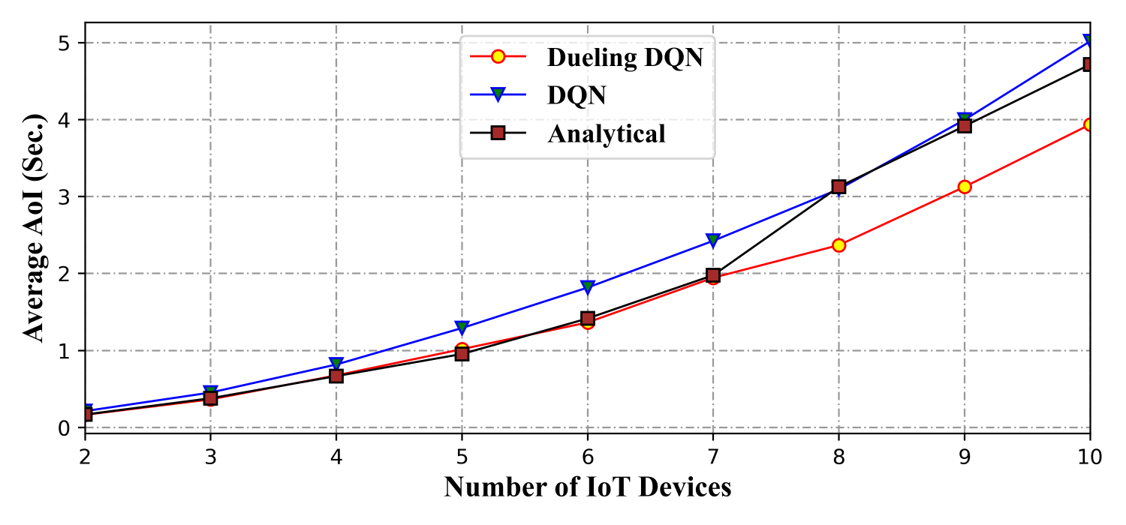}
  \caption{Performance of average end-to-end AoI with the increasing number of IoT devices.}\label{fig: DAOI}
\end{figure}

Figure \ref{fig: DAOITime} evaluates the average end-to-end AoI performance of our proposed dueling-DQN algorithm concerning different time slots and compares it with DQN and analytical results. As the number of time slots increases, the average end-to-end AoI grows monotonically concerning time $T$ for all algorithms. This is because, with more time slots, there is a higher likelihood that each packet generated by the IoT devices will take more time to schedule in the transmission channel, while also reducing the time available for generating updates. As shown in Figure \ref{fig: DAOITime}, for time slots ranging from 2 to 6, all algorithms exhibit similar performance. However, after the time slots increase beyond 6, dueling-DQN outperforms the other algorithms. This is attributed to the fact that the dueling-DQN algorithm remains more stable even with increased time slots. Both DRL algorithms (dueling-DQN and DQN) outperform the analytical results when the time slots exceed 9. This is because the DRL algorithms fully consider the effect of time slots on decision-making and utilize an experience replay buffer, which is uniformly sampled. This approach increases data efficiency and reduces variance, as the samples used in the update are less likely to be correlated.

\begin{figure}[!t]
  \centering
  \includegraphics[width=0.45\textwidth]{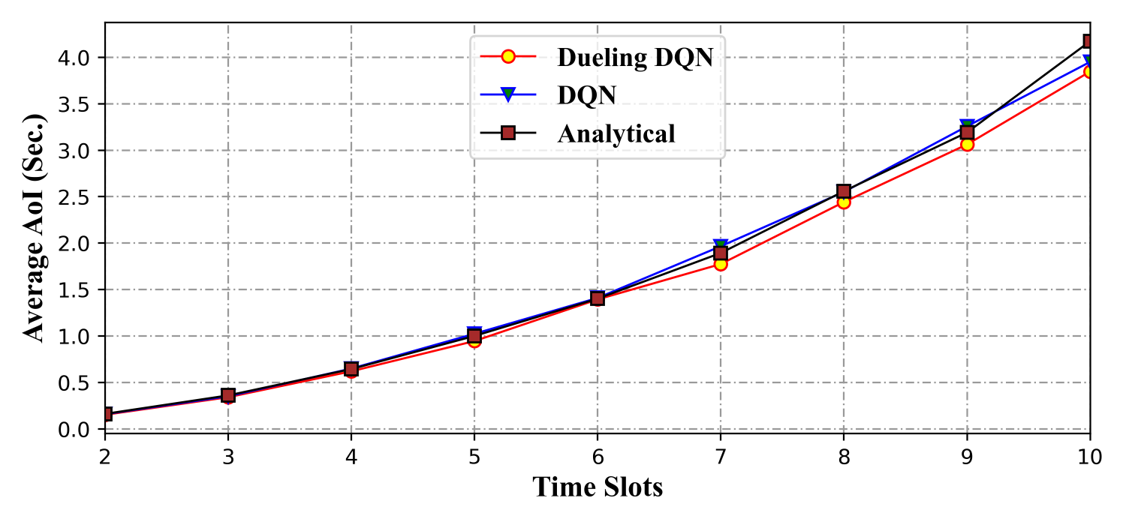}
  \caption{Performance of average end-to-end AoI with different time slots.}\label{fig: DAOITime}
\end{figure}

Figure \ref{fig: DAOIpacket} shows the performance of the algorithms when varying the packet arrival size. As the offloaded packet size increases, the average end-to-end AoI also increases. This is because it requires more time to transmit one packet for each data stream in the transmission channel, packet processing queue, and packet transmission queue, resulting in a higher average end-to-end AoI. The dueling-DQN algorithm achieves better performance in terms of AoI compared to the DQN algorithm and the analytical results. This is attributed to its ability to learn value functions without considering actions in each state, allowing it to offload more fresh information and optimize the average end-to-end AoI.

\begin{figure}[!t]
  \centering
  \includegraphics[width=0.45\textwidth]{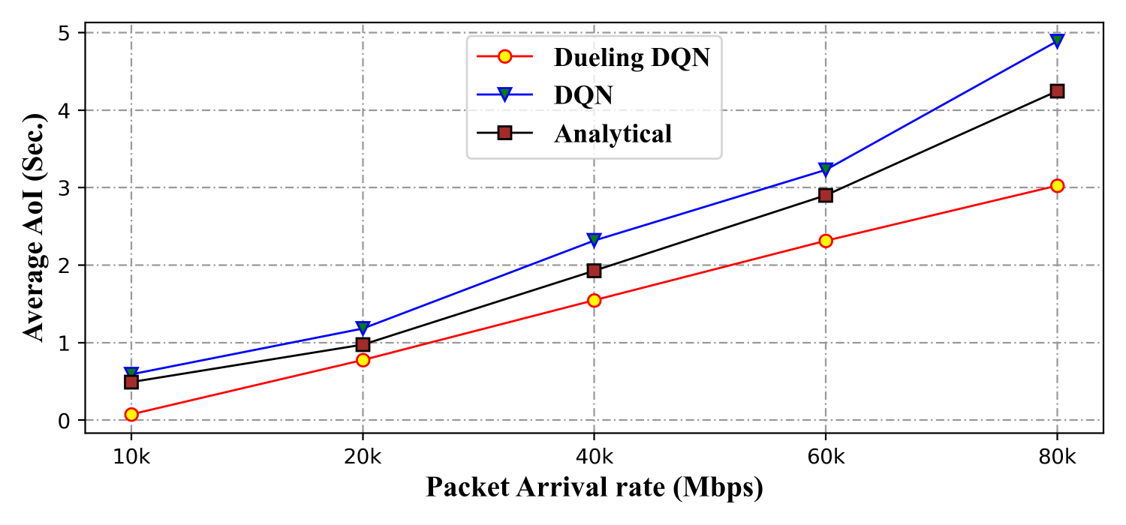}
  \caption{Performance of average end-to-end AoI with a different number of packet arrival.}\label{fig: DAOIpacket}
\end{figure}

\begin{figure}[!t]
  \centering
  \includegraphics[width=0.45\textwidth]{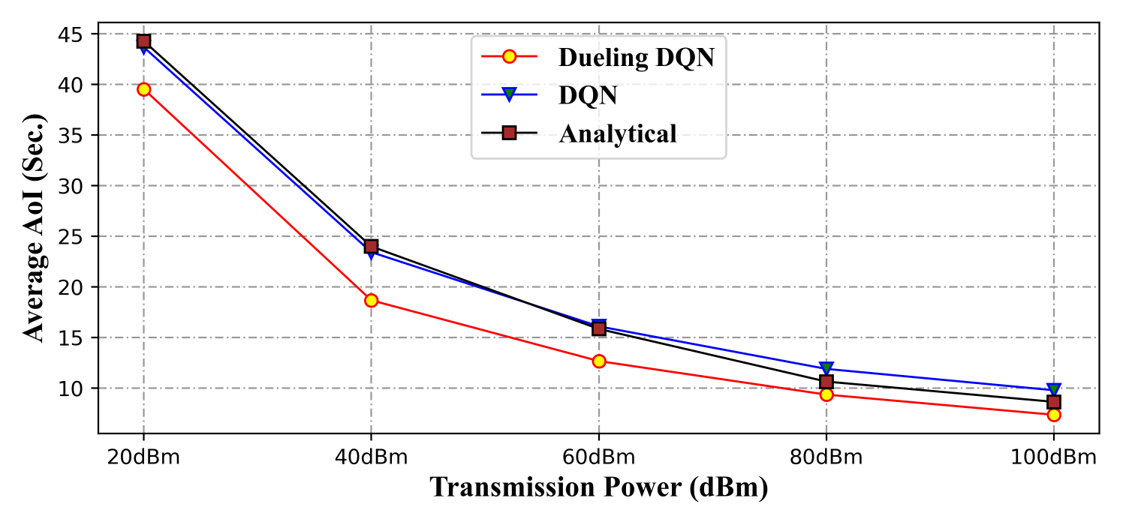}
  \caption{Performance of average end-to-end AoI with an increased number of transmit power.}\label{fig: DAOITPoer}
\end{figure}

Figure \ref{fig: DAOITPoer} illustrates the system performance evaluation with an increasing number of transmissions power. For all algorithms, the average end-to-end AoI decreases as transmission power increases. This is because higher transmission power makes it easier to offload the updated packet requirements, resulting in faster service times and a lower average end-to-end AoI. Notably, our proposed dueling-DQN algorithm performs the best, while the DQN algorithm has the worst performance. This improvement in the dueling-DQN algorithm is due to increased network stability and convergence speed, significantly reducing the average end-to-end AoI with higher transmission power. Additionally, for transmission power above 60dBm, the analytical results outperformed the DQN algorithm. Because the DQN algorithm sometimes produces wrong decisions due to overestimation and network instability, resulting in higher AoIs.

\section{Conclusion}
\label{Conclusion}
This research work considers an edge-enabled vehicular fog system and aims to minimize the average end-to-end system AoI. A time-step-based dynamic status update optimization problem is developed and analyzed, and a modified dueling-DQN-based algorithm is used to solve this problem. The simulation results demonstrate that our proposed dueling-DQN algorithm converges faster than the DQN algorithm. Furthermore, the proposed algorithm achieves a lower average end-to-end AoI compared to both the DQN algorithm and the analytical results. This improvement is attributed to the dueling-DQN algorithm's ability to learn the dynamics of the environment and make wise decisions.

Our future research will focus on leveraging the correlation of IoT device status updates to improve scheduling and offloading strategies. Although this paper assumes interoperability in vehicular fog environments, real-world implementation faces challenges due to resource and time limitations, making real-world validation a key research extension.

\bibliographystyle{IEEEtran}

\bibliography{AoI}

\begin{thebibliography}{10}
\providecommand{\url}[1]{#1}
\csname url@samestyle\endcsname
\providecommand{\newblock}{\relax}
\providecommand{\bibinfo}[2]{#2}
\providecommand{\BIBentrySTDinterwordspacing}{\spaceskip=0pt\relax}
\providecommand{\BIBentryALTinterwordstretchfactor}{4}
\providecommand{\BIBentryALTinterwordspacing}{\spaceskip=\fontdimen2\font plus
\BIBentryALTinterwordstretchfactor\fontdimen3\font minus \fontdimen4\font\relax}
\providecommand{\BIBforeignlanguage}[2]{{%
\expandafter\ifx\csname l@#1\endcsname\relax
\typeout{** WARNING: IEEEtran.bst: No hyphenation pattern has been}%
\typeout{** loaded for the language `#1'. Using the pattern for}%
\typeout{** the default language instead.}%
\else
\language=\csname l@#1\endcsname
\fi
#2}}
\providecommand{\BIBdecl}{\relax}
\BIBdecl

\bibitem{chen2022minimizingRev}
S.~Chen, T.~Zhang, Z.~Chen, Y.~Dong, M.~Wang, Y.~Jia, and T.~Q. Quek, ``Minimizing age-upon-decisions in bufferless system: Service scheduling and decision interval,'' \emph{IEEE Transactions on Vehicular Technology}, vol.~72, no.~1, pp. 1017--1031, 2022.

\bibitem{Menouar2017}
H.~Menouar, I.~Guvenc, K.~Akkaya, A.~S. Uluagac, A.~Kadri, and A.~Tuncer, ``Uav-enabled intelligent transportation systems for the smart city: Applications and challenges,'' \emph{IEEE Communications Magazine}, vol.~55, no.~3, p. 22–28, Mar. 2017.

\bibitem{bisquert2012application}
M.~Bisquert, E.~Caselles, J.~M. S{\'a}nchez, and V.~Caselles, ``Application of artificial neural networks and logistic regression to the prediction of forest fire danger in galicia using modis data,'' \emph{International Journal of Wildland Fire}, vol.~21, no.~8, pp. 1025--1029, 2012.

\bibitem{xiong2012intelligent}
Z.~Xiong, H.~Sheng, W.~Rong, and D.~E. Cooper, ``Intelligent transportation systems for smart cities: a progress review,'' \emph{Science China Information Sciences}, vol.~55, pp. 2908--2914, 2012.

\bibitem{Kaul2012}
S.~Kaul, R.~Yates, and M.~Gruteser, ``Real-time status: How often should one update?'' \emph{2012Proceedings IEEE INFOCOM, Orlando, FL, USA, 2012}, pp. 2731--2735, 2012.

\bibitem{chen2022age}
X.~Chen, K.~Gatsis, H.~Hassani, and S.~S. Bidokhti, ``Age of information in random access channels,'' \emph{IEEE Transactions on Information Theory}, vol.~68, no.~10, pp. 6548--6568, 2022.

\bibitem{Xie2022}
X.~Xie, H.~Wang, and M.~Weng, ``A reinforcement learning approach for optimizing the age-of-computing-enabled iot,'' \emph{IEEE Internet of Things Journal}, vol.~9, no.~4, pp. 2778--2786, Feb.15, 2022.

\bibitem{Roy2019}
R.~D. Yates and S.~K. Kaul, ``The age of information: Real-time status updating by multiple sources,'' \emph{IEEE Transactions on Information Theory}, vol.~65, no.~3, pp. 1807--1827, 2019.

\bibitem{hu2021status_9}
L.~Hu, Z.~Chen, Y.~Dong, Y.~Jia, L.~Liang, and M.~Wang, ``Status update in iot networks: Age-of-information violation probability and optimal update rate,'' \emph{IEEE Internet of Things Journal}, vol.~8, no.~14, pp. 11\,329--11\,344, 2021.

\bibitem{kar2023cost}
B.~Kar, Y.-D. Lin, and Y.-C. Lai, ``Cost optimization of omnidirectional offloading in two-tier cloud--edge federated systems,'' \emph{Journal of Network and Computer Applications}, vol. 215, p. 103630, 2023.

\bibitem{wakgra2024multi}
F.~G. Wakgra, B.~Kar, S.~B. Tadele, S.-H. Shen, and A.~U. Khan, ``Multi-objective offloading optimization in mec and vehicular-fog systems: A distributed-td3 approach,'' \emph{IEEE Transactions on Intelligent Transportation Systems}, 2024.

\bibitem{hu2015mobile}
Y.~C. Hu, M.~Patel, D.~Sabella, N.~Sprecher, and V.~Young, ``Mobile edge computing—a key technology towards 5g,'' \emph{ETSI white paper}, vol.~11, no.~11, pp. 1--16, 2015.

\bibitem{kar2021qos}
B.~Kar, K.-M. Shieh, Y.-C. Lai, Y.-D. Lin, and H.-W. Ferng, ``{QoS} violation probability minimization in federating vehicular-fogs with cloud and edge systems,'' \emph{IEEE Transactions on Vehicular Technology}, vol.~70, no.~12, pp. 13\,270--13\,280, 2021.

\bibitem{Sorkhoh2020}
I.~Sorkhoh, D.~Ebrahimi, S.~Sharafeddine, and C.~Assi, ``Minimizing the age of information in intelligent transportation systems,'' \emph{2020 IEEE 9th International Conference on Cloud Networking (CloudNet), Piscataway, NJ, USA, 2020}, p. 2020, 2020.

\bibitem{keshari2022survey}
N.~Keshari, D.~Singh, and A.~K. Maurya, ``A survey on vehicular fog computing: Current state-of-the-art and future directions,'' \emph{Vehicular Communications}, vol.~38, p. 100512, 2022.

\bibitem{chen2023improving_14}
Z.~Chen, Q.~Zeng, S.~Chen, M.~Li, M.~Wang, Y.~Jia, and T.~Q. Quek, ``Improving the timeliness of two-source status update systems in internet of vehicles with source-dedicated buffer: Resource allocation,'' \emph{IEEE Transactions on Vehicular Technology}, 2023.

\bibitem{Chao2020}
C.~Xu, H.~H. Yang, X.~Wang, and T.~Q.~S. Quek, ``Optimizing information freshness in computing-enabled iot networks,'' \emph{IEEE Internet of Things Journal}, vol.~7, no.~2, pp. 971--985, 2020.

\bibitem{Qiaobin2020}
Q.~Kuang, J.~Gong, X.~Chen, and X.~Ma, ``Analysis on computation-intensive status update in mobile edge computing,'' \emph{IEEE Transactions on Vehicular Technology}, vol.~69, no.~4, pp. 4353--4366, 2020.

\bibitem{Yin2017}
Y.~Sun, E.~Uysal-Biyikoglu, R.~D. Yates, C.~E. Koksal, and N.~B. Shroff, ``Update or wait: How to keep your data fresh,'' \emph{IEEE Transactions on Information Theory}, vol.~63, no.~11, pp. 7492--7508, 2017.

\bibitem{deng2022information_18}
D.~Deng, Z.~Chen, H.~H. Yang, N.~Pappas, L.~Hu, M.~Wang, Y.~Jia, and T.~Q. Quek, ``Information freshness in a dual monitoring system,'' in \emph{IEEE Global Communications Conference (GLOBECOM)}, 2022, pp. 4977--4982.

\bibitem{wang2016dueling}
Z.~Wang, T.~Schaul, M.~Hessel, H.~Hasselt, M.~Lanctot, and N.~Freitas, ``Dueling network architectures for deep reinforcement learning,'' in \emph{International conference on machine learning}.\hskip 1em plus 0.5em minus 0.4em\relax PMLR, 2016, pp. 1995--2003.

\bibitem{Abadi2016}
M.~Abadi, P.~Barham, J.~Chen, Z.~Chen, A.~Davis, J.~Dean, M.~Devin, S.~Ghemawat, G.~Irving, M.~Isard \emph{et~al.}, ``Tensorflow: a system for large-scale machine learning.'' in \emph{Osdi}, vol.~16, no. 2016.\hskip 1em plus 0.5em minus 0.4em\relax Savannah, GA, USA, 2016, pp. 265--283.

\bibitem{yates2018age}
R.~D. Yates and S.~K. Kaul, ``The age of information: Real-time status updating by multiple sources,'' \emph{IEEE Transactions on Information Theory}, vol.~65, no.~3, pp. 1807--1827, 2018.

\bibitem{Elmagid2022}
M.~A. Abd-Elmagid and H.~S. Dhillon, ``Age of information in multi-source updating systems powered by energy harvesting,'' \emph{IEEE Journal on Selected Areas in Information Theory}, vol.~3, no.~1, pp. 98--112, 2022.

\bibitem{Ying2022}
Y.~Liu, Z.~Chang, G.~Min, and S.~Mao, ``Average age of information in wireless powered mobile edge computing system,'' \emph{IEEE Wireless Communications Letters}, vol.~11, no.~8, pp. 1585--1589, 2022.

\bibitem{Zipeng2022}
Z.~Li, L.~Xiang, and X.~Ge, ``Age of information modeling and optimization for fast information dissemination in vehicular social networks,'' \emph{IEEE Transactions on Vehicular Technology}, vol.~71, no.~5, pp. 5445--5459, 2022.

\bibitem{Zoubeir2022}
Z.~Mlika and S.~Cherkaoui, ``Deep deterministic policy gradient to minimize the age of information in cellular v2x communications,'' \emph{IEEE Transactions on Intelligent Transportation Systems}, vol.~23, no.~12, pp. 23\,597--23\,612, 2022.

\bibitem{Sihua2022}
S.~Wang, M.~Chen, Z.~Yang, C.~Yin, W.~Saad, S.~Cui, and H.~V. Poor, ``Distributed reinforcement learning for age of information minimization in real-time iot systems,'' \emph{IEEE Journal of Selected Topics in Signal Processing}, vol.~16, no.~3, pp. 501--515, 2022.

\bibitem{Lingshan2022}
L.~Liu, K.~Xiong, J.~Cao, Y.~Lu, P.~Fan, and K.~B. Letaief, ``Average aoi minimization in uav-assisted data collection with rf wireless power transfer: A deep reinforcement learning scheme,'' \emph{IEEE Internet of Things Journal}, vol.~9, no.~7, pp. 5216--5228, 2022.

\bibitem{Xianfu2020}
X.~Chen, C.~Wu, T.~Chen, H.~Zhang, Z.~Liu, Y.~Zhang, and M.~Bennis, ``Age of information aware radio resource management in vehicular networks: A proactive deep reinforcement learning perspective,'' \emph{IEEE Transactions on Wireless Communications}, vol.~19, no.~4, pp. 2268--2281, 2020.

\bibitem{shannon1948mathematical}
C.~E. Shannon, ``A mathematical theory of communication,'' \emph{The Bell system technical journal}, vol.~27, no.~3, pp. 379--423, 1948.

\bibitem{weinberg2016optimal}
G.~V. Weinberg and V.~G. Glenny, ``Optimal rayleigh approximation of the k-distribution via the kullback--leibler divergence,'' \emph{IEEE Signal Processing Letters}, vol.~23, no.~8, pp. 1067--1070, 2016.

\bibitem{harchol2013performance}
M.~Harchol-Balter, \emph{Performance modeling and design of computer systems: queueing theory in action}.\hskip 1em plus 0.5em minus 0.4em\relax Cambridge University Press, 2013.

\bibitem{shore1982information}
J.~E. Shore, ``Information theoretic approximations for m/g/1 and g/g/1 queuing systems,'' \emph{Acta Informatica}, vol.~17, pp. 43--61, 1982.

\bibitem{prabhu1987bibliography}
N.~Prabhu, ``A bibliography of books and survey papers on queueing systems: Theory and applications,'' \emph{Queueing Systems}, vol.~2, no.~4, pp. 393--398, 1987.

\bibitem{OpenJm16}
G.~Brockman, V.~Cheung, L.~Pettersson, J.~Schneider, J.~Schulman, J.~Tang, and W.~Zaremba, ``Openai gym,'' \emph{arXiv preprint arXiv:1606.01540}, 2016.

\bibitem{BaseLine321}
A.~Raffin, A.~Hill, A.~Gleave, A.~Kanervisto, M.~Ernestus, and N.~Dormann, ``Stable-baselines3: Reliable reinforcement learning implementations,'' \emph{The Journal of Machine Learning Research}, vol.~22, no.~1, pp. 12\,348--12\,355, 2021.

\end{thebibliography}

\clearpage

\begin{appendices}
\section{Proof of Theorem 1}
According to Equation (\ref{eq:CM1}), the time required for a packet to be transmitted from the $j$th IoT device to the edge server is given as
\begin{equation}
\begin{aligned}\label{eq:CM23}
I_{S_j}^T=\frac{S_j}{B^{I \rightarrow A}}=\frac{S_j}{W
\log _2\left(1+\frac{P_j C d_j^{I A^{-l}}}{N^2}\right)}\\
=\frac{S_j\ln 2}{W} \frac{1}{\ln \left(1+\frac{P_j C d_j^{I A}}{N^2}\right)} , 
\end{aligned}
\end{equation}
where $S \in\left\{S_1, S_2, \cdots, S_J\right\}$, denotes the size of the original packets. Since $I_{S_j}^\text{T}$ is random due to the random gain of the channel and it monotonically decreases with channel gain $C$, using Equation (\ref{eq:CM23}), can be calculated as
\begin{subequations}
\begin{equation}
\begin{aligned}\label{eq:CM24a}
\frac{1}{I_{S_j}^T} \frac{S_{j} \ln 2}{W}=\ln \left(1+\frac{P_j C d_j^{I A^{-l}}}{N^2}\right),
\end{aligned}
\end{equation}
\begin{equation}
\begin{aligned}\label{eq:CM24b}
\frac{P_j C d_j^{I A}-l}{N^2}=\left(\exp \left(\frac{1}{I_{S_j}^T} \frac{S_{j} \ln 2}{W}\right)-1\right) ,
\end{aligned}
\end{equation}
\begin{equation}
\begin{aligned}\label{eq:CM24c}
C=\frac{N^2 \left(\exp \left(\frac{S_{j} \ln 2}{W} \frac{1}{I_{S_j}^T}\right)-1\right)}{P_j d_j^{I A}-l}.
\end{aligned}
\end{equation}
\end{subequations}
We can denote that $f(I_{S_j}^T )$ is the function inversely mapping from $I_{S_j}^T$ to $C$. Which is 
$f\left(I_{S_j}^T\right)$ which is given in Equation (\ref{eq:CM24c}).
In consequence, we obtain the probability density function Equation (\ref{eq:CM26}) and cumulative distribution function Equation (\ref{eq:CM25}) of $I_{S_j}^T$, as
\begin{equation}
\begin{aligned}\label{eq:CM25}
F_{I_{S_j}^T}(x) & =P\left(I_{S_j}^T \leq x\right)=\int_{f(t)}^{\infty} \exp (-x) d x \\
= & \int_{f(t)}^{\infty} \exp \left(-\frac{\left(\exp \left(\frac{s_{j} \ln 2}{W} \frac{1}{I_{S_j}^T}\right)-1\right)}{N^{-2} P_j  d_j^{I A^{-l}}}\right) d t  \\
& =\exp \left(\frac{\left(1-\exp \left(\frac{S_{j}\ln 2}{W} \frac{1}{t}\right)\right)}{N^{-2}  P_j d_j^{I A^{-l}}}\right).
\end{aligned}
\end{equation}

\begin{equation}
\begin{aligned}\label{eq:CM26}
& f_{I_{S_j}^T}(t)=\left(-F_{I_{S_j}^T}(t)\right) \frac{d}{d t}=\left(-\exp \left(\frac{\left(1-\exp \left(\frac{S_{j} \ln 2}{W} \frac{1}{t}\right)\right)}{N^{-2} P_j d_j^{I A^{-l}}}\right)\right) \frac{d}{d t} \\
&=\frac{S_{j} \ln 2}{W N^{-2} P_j  d_j^{I A}} \frac{\exp \left(\frac{S_{j} \ln 2}{W t}+\frac{\left(1-\exp \left(\frac{S \ln 2}{W  t}\right)\right)}{N^{-2} P_j d_j^{I A}-l}\right)}{t^2}.
\end{aligned}
\end{equation}
Consequently, for packets generated by $j$th IoT device, the expected transmission time is as
\begin{equation}
\begin{aligned}\label{eq:CM27}
\resizebox{0.3\textwidth}{!}{$\mathbb{E}\left[I_j^T\right]=\mathbb{E}\left[I_{S_j}^T \mid S=S_j\right]=\int_0^{\infty} t f_{I_{S_j}^T \mid s=s_j}(t) d t$}.
\end{aligned}
\end{equation}
Finally, by substituting Equation (\ref{eq:CM26}) in Equation (\ref{eq:CM27}), we can conclude Theorem \ref{Th_1} given in Equation (\ref{eq:CM10}).

\section{Proof of Theorem 2}
The proof for this Theorem \ref{Th_2} is almost similar to Theorem \ref{Th_1}, the only difference is the size of the packets (the size of the original packets will reduce after processing on the edge server). According to the transmission rate given in Equation (\ref{eq:CM2}), the expected transmission time of packet from the edge server to the vehicular fog is given as
\begin{equation}
\begin{gathered}
\begin{aligned}\label{eq:CM28}
V_{Y_j}^T=\frac{Y_j}{B^{I \rightarrow A}}
=\frac{Y_j \ln 2}{W} \frac{1}{\ln \left(1+\frac{P_A C d^{A V^{-l}}}{N^2}\right)},
\end{aligned}
\end{gathered}
\end{equation}
where $Y \in\left\{Y_1, Y_2, \cdots, Y_J\right\}$ denotes the size of the processed packet. The rest of this proof is similar to the proof of Theorem \ref{Th_1}, and the expression is given as
\begin{subequations}
\begin{equation}
\begin{aligned}\label{eq:CM29a}
\frac{1}{V_{Y_j}^T} \frac{Y_j \ln 2}{W}=\ln \left(1+\frac{P_A C d^{A V}-l}{N^2}\right) \\
\end{aligned}
\end{equation}
\begin{equation}
\begin{aligned}\label{eq:CM29b}
C=\frac{N^2 \left(\exp \left(\frac{Y_j \ln 2}{W} \frac{1}{V_{Y_j}^T}\right)-1\right)}{P_A d^{A V^{-l}}}=f\left(V_{Y_j}^T\right),
\end{aligned}
\end{equation}
\end{subequations}
where $f(V_{Y_j}^T )$ is a function inversely mapping from $V_{Y_j}^T$ to $C$ given in Equation (\ref{eq:CM29b}), which is given as
$f\left(V_{Y_j}^T\right)$ in Equation (\ref{eq:CM29b}).
Thus, we obtain the cumulative distribution function Equation (\ref{eq:CM31}) and probability density function Equation (\ref{eq:CM30}) for $V_{Y_j}^T$, respectively
\begin{equation}
\begin{gathered}
\begin{aligned}\label{eq:CM30}
F_{V_{Y_j}^T}(t)
=\exp \left(\frac{N^2 \left(1-\exp \left(\frac{Y_{j} \ln 2}{W} \frac{1}{t}\right)\right)}{P_A d^{A V^{-l}}}\right).
\end{aligned}
\end{gathered}
\end{equation}

\begin{equation}
\begin{gathered}
\begin{aligned}\label{eq:CM31}
f_{V_{Y_j}^T}(t)
=\frac{Y_{j} \ln 2 N^2}{W P_A d^{A V}-l} \frac{\exp \left(\frac{Y_{j} \ln 2}{W t}+\frac{N^2 \times\left(1-\exp \left(\frac{Y_{j} \ln 2}{W t}\right)\right)}{P_A d^{A V}-l}\right)}{t^2}.
\end{aligned}
\end{gathered}
\end{equation}
As such, for packets transmitted from the edge node, It is possible to estimate transmission time as
\begin{equation}
\begin{aligned}\label{eq:CM32}
\mathbb{E}\left[V_j^T\right]=\mathbb{E}\left[V_{Y_j}^T \mid Y=Y_j\right]=\int_0^{\infty} t f_{V_{Y_j}^T \mid Y=Y_j}(t) dt .
\end{aligned}
\end{equation}
Finally, by substituting Equation (\ref{eq:CM31}) in Equation (\ref{eq:CM32}), we can conclude Theorem \ref{Th_2} given in Equation (\ref{eq:CM11}).

\section{Proof of Theorem 3}
Let $\mathbb{E}\left[N_j^P\right]$, denote the average number of update packets within the processing queue, and $\mathbb{E}\left[A_R^P\right]$, remaining processing time in service.
Then for $j=1$, the expectation of $\mathbb{E}\left[L_j^P\right]$, is given as
\begin{equation}
\begin{aligned}\label{eq:CM33}
& P r_{\mathrm{B}} \times \mathbb{E}\left[A_R^P\right]+\mathbb{E}\left[N_1^P\right] \times \mathbb{E}\left[A_1^P\right] \\
&=P r_{\mathrm{B}} \times \mathbb{E}\left[A_R^P\right]+\lambda_1 \mathbb{E}\left[L_1^P\right] \times \mathbb{E}\left[A_1^P\right].
\end{aligned}
\end{equation}
From Equation (\ref{eq:CM33}), we found:
\begin{equation}
\begin{aligned}\label{eq:CM34}
\mathbb{E}\left[L_1^P\right] & =\frac{P r_{\mathrm{B}} \times \mathbb{E}\left[A_R^P\right]}{1-\lambda_1 \times \mathbb{E}\left[A_1^P\right]}=\frac{P r_{\mathrm{B}} \times \mathbb{E}\left[A_R^P\right]}{1-\rho_1},
\end{aligned}
\end{equation}
where $\operatorname{Pr}_{\mathrm{B}}=\sum_{j=1}^J \lambda_j \mathbb{E}\left[A_j^P\right]=\sum_{j=1}^J \lambda_j \frac{1}{\mu^A}$, and $\rho_1$ it represents the load on the processing subsystem caused by the arrival of packets from the IoT device 1. Similarly, for $j \geq 2$, we have
\begin{equation}
\begin{aligned}\label{eq:CM35}
& \mathbb{E}\left[L_j \geq 2\right]=\frac{P r_{\mathrm{B}} \times \mathbb{E}\left[A_R^P\right]+\sum_{i=1}^{J-1} \lambda_i \mathbb{E}\left[L_i^P\right] \times \mathbb{E}\left[A_i^P\right]}{1-\sum_{i=1}^J \rho_i}.
\end{aligned}
\end{equation}
By substituting Equation (\ref{eq:CM34}) in Equation (\ref{eq:CM35}), we have
\begin{equation}
\begin{aligned}\label{eq:CM36}
\mathbb{E}\left[L_j \geq 2\right]
& =\frac{P r_{\mathrm{B}} \times \mathbb{E}\left[A_R^P\right]}{\left(1-\sum_{i=1}^J \rho_i\right)\left(1-\rho_i\right)}.
\end{aligned}
\end{equation}
Then, based on the above analysis, we can calculate $\mathbb{E}\left[L_j^P\right]$ as
\begin{equation}
\begin{aligned}\label{eq:CM37}
\mathbb{E}\left[L_j^P\right]=\left\{\begin{array}{l}
\frac{P r_{\mathrm{B}} \times \mathbb{E}\left[A_R^P\right]}{1-\rho_j}, \quad J=1 \\
\frac{P_{r_{\mathrm{B}} \times \mathbb{E}}\left[A_R^P\right]}{\left(1-\sum_{i=1}^J \rho_i\right)\left(1-\Sigma_{i=1}^{J-1} \rho_i\right)}, \quad J \geq 2
\end{array},\right.
\end{aligned}
\end{equation}
where $\mathbb{E}\left[A_R^P\right]$ is the remaining processing time. By applying the renewal-reward theory \cite{harchol2013performance}, we can determine time averages of multiple quantities by considering only the average of a single renewal cycle. To calculate the time-average excess, we follow the following procedure
\begin{equation}
\begin{aligned}\label{eq:CM38}
\mathbb{E}\left[A_R^P\right]=\frac{\mathbb{E}\left[\left(A^P\right)^2\right]}{2 \mathbb{E}\left[A^P\right]}=\frac{\sum_{j=1}^J \rho_j^2 / \lambda_j}{\sum_{i=1}^J \rho_i},
\end{aligned}
\end{equation}

$\text{where},\left\{\begin{array}{l}
\mathbb{E}\left[A^P\right]=\sum_{j=1}^J \frac{\lambda_j}{\sum_{i=1}^j \lambda_i} \mathbb{E}\left[A_j^P\right]=\frac{\sum_{j=1}^J \rho_j}{\sum_{i=1}^J \lambda_i},\\
\mathbb{E}\left[\left(A^P\right)^2\right]=\sum_{j=1}^J \frac{\lambda_j}{\sum_{i=1}^j \lambda_i} \mathbb{E}\left[\left(A_J^P\right)^2\right]=\frac{\sum_{j=1}^J \rho_j^2 / \lambda_j}{\sum_{i=1}^J \lambda_i}
\end{array}\right.$

Finally, by substituting Equation (\ref{eq:CM38}) in Equation (\ref{eq:CM37}), we can conclude Theorem \ref{Th_3} given in Equation (\ref{eq:CM13}).

\section{Proof of Theorem 4}
To prove Theorem \ref{Th_4}, we adopt the PME to derive an approximation of the expectation $\mathbb{E}\left[L_j^T\right]$, $\forall j \in J,$. An average waiting time for a typical packet in the transmission queue can be calculated as
\begin{equation}
\begin{aligned}\label{eq:CM39}
\mathbb{E}\left[L_{a v}^T\right]=\sum_{j=1}^J P r_j^T \mathbb{E}\left[L_j^T\right]=\frac{\Sigma_{j=1}^J \lambda_j}{\Sigma_{i=1}^J \lambda_i} \mathbb{E}\left[L_j^T\right],
\end{aligned}
\end{equation}
where $Pr_j^T$ Probability of an IoT device sending a packet to the transmission queue $j$, and $\mathbb{E}\left[L_j^T\right]$ is the estimation of its waiting time. From Theorem \ref{Th_1}, a typical packet spends the following amount of time in the transmission subsystem
\begin{equation}
\begin{aligned}\label{eq:CM40}
& \mathbb{E}\left[I_{a v}^T\right]=\sum_{j=1}^J \mathbb{E}\left[I_{S_j}^T \mid S=S_j\right] \operatorname{Pr}\left(S=S_j\right) \\
= & \sum_{j=1}^J \frac{\lambda_j}{\sum_{i=1}^J \lambda_i} \mathbb{E}\left[I_{S_j}^T \mid S=S_j\right]=\frac{\Sigma_{j=1}^J \lambda_j \mathbb{E}\left[I_j^T\right]}{\sum_{i=1}^J \lambda_i},
\end{aligned}
\end{equation}
where $\mathbb{E}\left[I_j^T\right]$ the estimation time in (\ref{eq:CM10}). Then, by Little’s law \cite{harchol2013performance} applying Equation (\ref{eq:CM39}) to the transmission subsystem, obtain as
\begin{equation}
\begin{gathered}\label{eq:CM41}
\mathbb{E}\left[L_{a v}^T\right]=\frac{\mathbb{E}\left[X^T\right]}{\sum_{i=1}^J \lambda_i}-\mathbb{E}\left[I_{a v}^T\right] \\
=\operatorname{Pr}_j^T \mathbb{E}\left[L_j^T\right]=\frac{\sum_{j=1}^J \lambda_j}{\sum_{ij=1}^J \lambda_i} \mathbb{E}\left[L_j^T\right]=\frac{\mathbb{E}\left[x^T\right]}{\sum_{i=1}^J \lambda_i}-\mathbb{E}\left[I_{a v}^T\right],
\end{gathered}
\end{equation}
where $\mathbb{E}\left[X^T\right]$ estimated total number of packets transmitted by the transmission system, $\mathbb{E}\left[I_{a v}^T\right]$ is given by Equation (\ref{eq:CM40}), as $X^{T}$ Since we are dealing with a random integer value, the PME can be used to represent its probability mass function, which is a uniquely correct and self-consistent method of estimating probability distributions \cite{shore1982information}. Let $Pr(X^T)$ be the probability that the number of customers served in a busy period $X$, and $M$ be the moments of $Pr(X^T)$. In general, $M$ can be expressed as a function of $\lambda_{j}$ and the service time. Then the maximum entropy distribution is
\begin{equation}
\begin{aligned}\label{eq:CM42}
& \operatorname{Pr}\left(X^T\right)=Z^{-1} \exp \left(-\sum_{m=1}^M \alpha_m(x)^m\right) \\
& \quad=Z^{-1} \exp \left(\alpha_1 x\right), \forall x \in\{0,1,2, \cdots,\},
\end{aligned}
\end{equation}
\begin{equation}
\begin{aligned}\label{eq:CM43}
\text { where } Z=\left(-\sum_{x=1}^M \alpha_m(x)^m\right)=\left(1-\exp \left(-\alpha_1\right)\right)^{-1} ,
\end{aligned}
\end{equation}
where $\alpha_m$ is the Lagrange multiplier associated with $m$th moment of the random variable $X^T$, By using Little’s law to the transmission queue and merging Equations (\ref{eq:CM42}) and (\ref{eq:CM43}), we have the following equations
\begin{equation}
\begin{aligned}\label{eq:CM44}
\operatorname{Pr}\left(X^T=0\right)=1-\rho_T \approx\left(1-\exp \left(-\alpha_1\right)\right),
\end{aligned}
\end{equation}
where $\rho_T=\sum_{j=1}^J \lambda_j \mathbb{E}\left[I_j^T\right] \quad$ probability of a busy server. As such, using this analysis, we have
\begin{equation}
\begin{aligned}\label{eq:CM45}
\mathbb{E}\left[X^T\right] \approx \frac{\partial \ln \left(1-\exp \left(-\alpha_1\right)\right)}{\partial \alpha_1}=\frac{\sum_{j=1}^J \lambda_j \mathbb{E}\left[I_j^T\right]}{1-\Sigma_{i=1}^J \lambda_i \mathbb{E}\left[I_j^T\right]}.
\end{aligned}
\end{equation}
In addition to this packet waiting or being served in the processing queue, packets preceding it are sent sequentially into the transmission queue.
Finally, by substituting Equation (\ref{eq:CM39}) and Equation (\ref{eq:CM45}) in Equation (\ref{eq:CM41}), we can conclude Theorem \ref{Th_4} given in Equation (\ref{eq:CM14}).
\end{appendices}
\ifCLASSOPTIONcaptionsoff
  \newpage
\fi

\end{document}